\newcommand{\mat}[1]{{\boldsymbol #1}}
\newcommand{\vect}[1]{{\boldsymbol #1}}
\newcommand{\titleq}[1]{{#1}}
\newcommand{\expo}[1]{{\exp\left({#1}\right)}}
\newcommand{\func}[1]{{\tt#1\rm}}
\newcommand{\tint}[0]{u}
\newcommand{\imag}[0]{\imath\cdot}
\newcommand{\imagND}[0]{\imath}
\newtheorem{proposition}{Proposition}
\begin{document}
\title{Stochastic partial differential equation based modelling of large space-time data sets}
\author{Fabio Sigrist, Hans R. K\"unsch, Werner A. Stahel\\ Seminar for
  Statistics, Department of Mathematics, ETH Z\"urich\\ 8092 Z\"urich, Switzerland}
  \date{March 24, 2014}
\maketitle


\begin{center}
\textbf{Abstract}
\end{center}

Increasingly larger data sets of processes in space and time ask for statistical
models and methods that can cope with such data. We show that the solution of a stochastic advection-diffusion partial differential equation provides a flexible model class for spatio-temporal
processes which is computationally feasible also
for large data sets. The Gaussian process defined through the stochastic partial differential equation has in general a nonseparable
covariance structure. Furthermore, its parameters can be
physically interpreted as explicitly modeling phenomena such as transport and diffusion that occur in
many natural processes in diverse fields ranging from environmental sciences to ecology. In order to obtain computationally efficient 
statistical algorithms we use spectral methods to solve the stochastic partial differential equation. This has the advantage that approximation errors do not accumulate
over time, and that in the spectral space the computational cost grows linearly with the dimension, the total computational 
costs of Bayesian or frequentist 
inference being dominated by the fast Fourier transform. The proposed model is
applied to postprocessing of precipitation forecasts from a numerical
weather prediction model for northern Switzerland. In contrast to the
raw forecasts from the numerical model, the postprocessed 
forecasts are calibrated and quantify prediction uncertainty. Moreover, they outperform the raw forecasts, in the sense that they have a lower mean absolute error.
\vspace*{.3in}

\noindent\textsc{Keywords}: {spatio-temporal model, Gaussian process,
  physics based model, advection-diffusion equation, spectral methods, numerical weather prediction}
\newpage

\section{\titleq{Introduction}}
Space-time data arise in many applications, see \cite{CrWi11}
for an introduction and an overview. Increasingly larger space-time data sets are obtained, for
instance, from remote sensing satellites or deterministic physical models
such as numerical weather prediction (NWP) models. Statistical
models are needed that can cope with such data.
 
As \cite{WiHo10} point out, there are two basic paradigms for constructing spatio-temporal models. The
first approach is descriptive and follows the traditional geostatistical
paradigm, using joint space-time covariance functions \citep{CrHu99, Gn02,
  Ma03, Wi03, St05, PaSc06}. The second approach is dynamic and combines
ideas from time-series and spatial statistics \citep{SoSw96, WiCr99, HuHs04, XuWiFo05, GeEtAl05,
  JoCrHu07, SiKuSt11}.

Even for purely spatial data, developing methodology which can handle large
data sets is an active area of research. \cite{BaBrGe04} refer to this as the
``big n problem''. Factorizing large covariance matrices is not
possible without assuming a special structure or using approximate
methods. Using low rank matrices is one approach \citep{NyWiRo02, BaEtAl08,
  CrJo08, St08, Wi2010}. Other proposals include using Gaussian Markov
random-fields (GMRF) \citep{RuTj02, RuHe05, LiLiRu10} or applying tapering
\citep{FuGeNy06} thereby obtaining sparse precision or covariance
matrices, respectively, for which calculations can be done efficiently.  Another proposed solution is to approximate the likelihood so that it can be
evaluated faster \citep{Ve88, StChWe04, Fu07, EiEtAl11}. \cite{RoWi05} and
\cite{Pa07} use Fourier functions to reduce computational costs.

In a space-time setting, the situation is the same, if not worse: one runs
into a computational bottleneck with high dimensional data since the computational cost to factorize dense $NT\times NT$ covariance matrices is $O((NT)^3)$, $N$ and $T$ being the number of points in space and time,
respectively. Moreover, specifying flexible and realistic space-time
covariance functions is a nontrivial task.

In this paper, we follow the dynamic approach and study models which are
defined through a stochastic advection-diffusion partial differential equation (SPDE). This has
the advantage of providing physically motivated parametrizations of
space-time covariances. We show that when solving the SPDE using Fourier functions, one
can do computationally efficient statistical inference. In the spectral
space, computational costs for the Kalman filter and backward sampling
algorithms are of order $O(NT)$. As we show, roughly speaking, this computational efficiency is due to the temporal Markov
property, the fact that Fourier functions are eigenfunctions of
the spatial differential operators, and the use of some matrix identities. The overall
computational costs are then determined by the ones of the fast Fourier
transform (FFT) \citep{CoTu65} which are $O(TN\log N)$. In addition, computational time
can be further reduced by running the $T$ different FFTs in parallel.

Defining Gaussian processes through stochastic differential equations has a
long history in statistics going back to early works such as \cite{Wh54},
\cite{He55}, and \cite{Wh62}. Later works include \cite{JoZh97} and \cite{BrKa00}. Recently,
\cite{LiLiRu10} have shown how a certain class of SPDEs can be solved using
finite elements to obtain parametrizations of spatial GMRF. Note that a potential caveat of these SPDE approaches is that it is nontrivial to generalize the linear equation to non-linear ones. 

Spectral methods for solving partial differential equations are well
established in the numerical mathematics community (see, e.g.,
\cite{GoOr77}, \cite{Fo92}, or \cite{Ha04}). In contrast, statistical models
have different requirements and goals, since the (hyper-)parameters of an (S)PDE are
not known \textit{a priori} and need to be estimated. Spectral methods have also
been used in spatio-temporal statistics, mostly for approximating or
solving deterministic integro-difference equations (IDEs) or PDEs. \cite{WiCr99} introduce a dynamic
spatio-temporal model obtained from an IDE that is approximated using a
reduced-dimensional spectral basis. Extending this work, \cite{Wi02} and
\cite{XuWiFo05} propose parametrizations of spatio-temporal processes based on
IDEs. Modeling tropical ocean surface winds, \cite{WiEtAl01} present a
physics based model based on the shallow-water equations. \citet[Chapter
7]{CrWi11} give an overview of basis function expansions in
spatio-temporal statistics.

The novel features of our work are the following. While spectral methods
have been used for approximating deterministic IDEs and PDEs in the statistical
literature, there is no article, to our knowledge, that explicitly shows how to obtain a space-time Gaussian process by solving an advection-diffusion
SPDE using the real Fourier transform. Moreover, we present computationally
efficient algorithms for doing statistical inference, which use the fast Fourier transform and the Kalman
filter. The computational burden can be additionally alleviated by applying
dimension reduction. We also give a bound on
the accuracy of the approximate solution. In the application, our main objective is to postprocess precipitation
forecasts, explicitly modeling spatial and temporal variation. The idea is that the
spatio-temporal model not only accounts for dependence, but also
captures and extrapolates dynamically an error term of the NWP model in space and time.

The remainder of this paper is organized as follows. Section \ref{ContMod}
introduces the continuous space-time Gaussian process defined through the
advection-diffusion SPDE. In Section \ref{SpecSpace}, it is shown how the solution of the SPDE can be approximated
using the two-dimensional real Fourier transform, and we give convergence
rates for the approximation. Next, in Section \ref{inference}, we show how to do computationally efficient inference. In Section \ref{Postproc}, the spatio-temporal
model is used as part of a hierarchical Bayesian model, which we then apply
for postprocessing of precipitation forecasts.

All the methodology presented in this article is implemented in the R
package \func{spate} (see \cite{SiKuSt12b}).

\section{\titleq{A Continuous Space-Time Model: The Advection-Diffusion SPDE}}\label{ContMod}
In one dimension, a fundamental process is the Ornstein-Uhlenbeck process
which is governed by a relatively simple stochastic differential equation (SDE). The process has an exponential covariance function and its
discretized version is the famous AR(1) model. In the two dimensional spatial
case, \cite{Wh54} argues convincingly that the process with a Whittle
correlation function is an ``elementary'' process (see Section
\ref{InnoProc} for further discussion). If the time dimension is added, we
think that the process defined through the stochastic partial
differential equation (SPDE) in \eqref{SPDE} has properties that make it a good candidate for an
``elementary'' spatio-temporal process. It is a linear equation that 
explicitly models phenomena such as transport and diffusion that occur in
many natural processes ranging from environmental sciences to
ecology. This means that, if desired, the parameters can be given a
physical interpretation. Furthermore, if some parameters equal zero (no advection
and no diffusion), the covariance structure reduces to a separable one with
an AR(1) structure over time and a certain covariance structure over space. 

The advection-diffusion SPDE, also called transport-diffusion SPDE, is given by
\begin{equation}\label{SPDE}
\frac{\partial}{\partial
  t}\xi(t,\vect{s})=-\vect{\mu}^T\nabla
\xi(t,\vect{s})+\nabla\cdot\mat{\Sigma}\nabla\xi(t,\vect{s})-\zeta \xi(t,\vect{s})+\epsilon(t,\vect{s}),
\end{equation}
with $\vect{s}=(x,y)^T\in \mathbb{R}^{2}$, where  $\nabla =\left(\frac{\partial }{\partial x},\frac{\partial }{\partial
    y}\right)^T$ is the gradient operator, and, for a vector field $\vect{F}=(F^x,F^y)^T$,  $\nabla\cdot
\vect{F}=\frac{\partial F^x}{\partial x}+\frac{\partial F^y}{\partial y}$ is
the divergence operator. $\epsilon(t,\vect{s})$ is a Gaussian process that is temporally white and
spatially colored. See Section \ref{InnoProc} for a discussion on the
choice of the spatial covariance function. \cite{He55} and \cite{Wh63} introduced and analyzed SPDEs of similar form
as in \eqref{SPDE}. \cite{JoZh97} also investigated SPDE based
models. Furthermore, \cite{BrKa00} obtained such an advection-diffusion SPDE as a limit of
stochastic integro-difference equation models. Without giving any concrete
details, \cite{LiLiRu10} suggested that this SPDE can be used in connection
with their GMRF method. See also \cite{SiLiRu12} and \cite{YuEtAl12}. \cite{CaEtAl13} model particulate matter concentration in space and time with a separable covariance structure and an SPDE based spatial Gaussian Markov
random field for the innovation term. \cite{AuSi12} and \cite{HuEtAl13} use systems of SPDEs to define multivariate spatial models.

The SPDE has the following interpretation. Heuristically, an SPDE specifies what happens locally at each point in space during a small time step. The first term $\vect{\mu}^T\nabla
\xi(t,\vect{s})$ models transport effects (called advection in weather
applications), $\vect{\mu}=(\mu_x,\mu_y)^T\in \mathbb{R}^2$ being a drift or velocity vector. The second term, $\nabla\cdot\mat{\Sigma}\nabla\xi(t,\vect{s})$, is a
diffusion term that can incorporate anisotropy. If $\mat{\Sigma}$ is the
identity matrix, this term reduces to the divergence ($\nabla\cdot$) of the
gradient ($\nabla$) which is the ordinary Laplace operator
$\nabla\cdot\nabla=\Delta=\frac{\partial^2}{\partial
  x^2}+\frac{\partial^2}{\partial y^2}$. The third term $-\zeta
\xi(t,\vect{s})$, $\zeta>0$, diminishes $\xi(t,\vect{s})$ at a constant rate and thus
accounts for damping. Finally, $\epsilon(t,\vect{s})$ is a source-sink or stochastic forcing
term, also called innovation term, that can be interpreted as describing, amongst others, convective
phenomena in precipitation modeling applications. 

Concerning the diffusion matrix $\mat{\Sigma}$, we suggest the following parametrization
\begin{equation}
\mat{\Sigma}^{-1}=\frac{1}{\rho_1^2}\left(\begin{matrix}\cos{\psi} &
    \sin{\psi}\\ -\gamma\cdot\sin{\psi} & \gamma\cdot\cos{\psi}\end{matrix}\right)^{T}\left(\begin{matrix}\cos{\psi} &
    \sin{\psi}\\ -\gamma\cdot\sin{\psi} & \gamma\cdot\cos{\psi}\end{matrix}\right),
\end{equation}
where $\rho_1>0$, $\gamma>0$, and $\psi\in[0,\pi/2]$. The parameters are interpreted as follows. $\rho_1$ acts as a range
parameter and controls the amount of diffusion. The parameters $\gamma$ and
$\psi$ control the amount and the direction of anisotropy. With
$\gamma=1$, isotropic diffusion is obtained.

\begin{figure}
\centering
\makebox{\includegraphics[width=\textwidth]{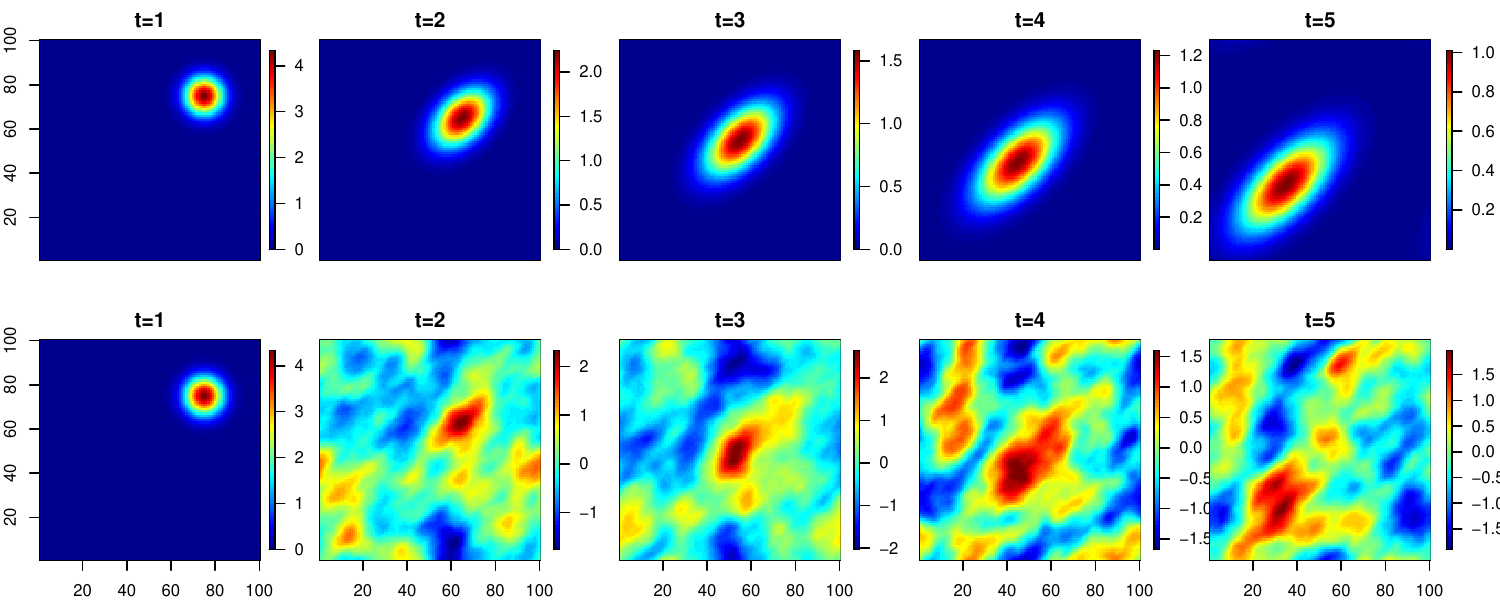}}
\caption{Illustration of the SPDE in \eqref{SPDE} and the corresponding PDE. The top row
  illustrates a solution to the PDE which corresponds to the deterministic part of the SPDE without stochastic term
  $\epsilon(t,\vect{s})$. The bottom row shows one
  sample from the distribution specified by the SPDE with a fixed initial condition. The drift vector points from
  north-east to south-west and the diffusive part exhibits anisotropy in
  the same direction. The same parameters are used
for both the PDE and the SPDE: $\zeta =
-\log(0.99), \rho_1 = 0.06, \gamma = 3, \psi = \pi/4, \mu_x = -0.1, \mu_y
= -0.1$, and for the stochastic innovations: $\rho_0 = 0.05, \sigma^2 = 0.7^2$. The color scales are different in different panels.} 
\label{fig:PDEIllus}
\end{figure}
Figure \ref{fig:PDEIllus} illustrates the SPDE in \eqref{SPDE} and the
corresponding PDE without the stochastic innovation term. The top row
  shows a solution to the PDE which corresponds the deterministic part of the SPDE that is obtained when there is no stochastic term $\epsilon(t,\vect{s})$. The
figure shows how the initial state in the top-left plot gets propagated
forward in time. The drift vector points from north-east to south-west and
the diffusive part exhibits anisotropy in the same direction. A $100 \times
100$ grid is used and the PDE is solved in the spectral domain using the
method described below in Section \ref{SpecSpace}. There is a fundamental difference
between the deterministic PDE and the probabilistic SPDE. In the first
case, a deterministic process is modeled directly. In the second case, the
SPDE defines a stochastic process. Since the operator is linear and the
input Gaussian, this process is a Gaussian process whose covariance
function is implicitly defined by the SPDE. The bottom row of Figure \ref{fig:PDEIllus} shows one
  sample from this Gaussian process. The same
initial state as in the deterministic example is used, i.e., we use a fixed
initial state. Except for the stochastic part, the same parameters are used
for both the PDE and the SPDE. For the innovations $\epsilon(t,\vect{s})$, we choose a Gaussian
process that is temporally independent and spatially structured according to the
Mat\'ern covariance function with smoothness parameter $1$. Again, the drift vector points from north-east to south-west and the diffusive part exhibits anisotropy in the same direction.

Note that the use of this spatio-temporal Gaussian process is not
restricted to situations where it is a priori known that phenomena such as
transport and diffusion occur. In the one dimensional case, it is common to use the AR(1) process in
situations where it is not a priori clear whether the modeled process
follows the dynamic of the Ornstein-Uhlenbeck SDE. In two dimensions, the same holds true
for the process with the Whittle covariance function, and even more so for the
process having an exponential covariance structure. Having this in mind, even though the SPDE in \eqref{SPDE} is physically motivated, it
can be used as a general spatio-temporal model. As the case may be, the interpretation of the
parameters can be more or less straightforward.

\subsection{Spectral Density and Covariance Function}
As can be shown using the Fourier transform (see, e.g., \cite{Wh63}), if
the innovation process $\epsilon(t,\vect{s})$ is stationary with spectral
density $\widetilde{f}(\vect{k})$, the spectrum of the stationary solution $\xi(t,\vect{s})$  of the SPDE
\eqref{SPDE} is
\begin{equation}\label{SPDESpec}
f(\omega,\vect{k})=\widetilde{f}(\vect{k})\frac{1}{(2\pi)}\left(\left(\vect{k}^T\mat{\Sigma}\vect{k}+\zeta\right)^2+\left(\omega+\vect{\mu}^T\vect{k}\right)^2\right)^{-1},
\end{equation}
where $\vect{k}$ and $\omega$ are spatial wavenumbers and temporal
frequencies. The covariance function $C(t,\vect{s})$ of $\xi(t,\vect{s})$
is then given by
\begin{equation}\label{SPDECov}
\begin{split}
C(t,\vect{s})=&\int
f(\omega,\vect{k})\expo{\imag t\omega}\expo{\imag \vect{s}'\vect{k}}d\vect{k}
d\omega\\
=&\int\widetilde{f}(\vect{k})\frac{\expo{-\imag\vect{\mu}^T\vect{k}t-(\vect{k}^T\mat{\Sigma}\vect{k}+\zeta)|t|}}{2(\vect{k}^T\mat{\Sigma}\vect{k}+\zeta)}\expo{\imag \vect{s}'\vect{k}}d\vect{k},
\end{split}
\end{equation}
where $\imagND$ denotes the imaginary number $\imagND^2=-1$, and the integration over the temporal frequencies $\omega$ follows
from the calculation of the characteristic function of the Cauchy
distribution \citep{AbSt64}. The spatial integral above has no closed form
solution but can be computed approximately by numerical integration.

Since, in general, the spectrum does not factorize into a temporal and a
spatial component, we see that $\xi(t,\vect{s})$ has a non-separable
covariance function (see \cite{GnGeGu07} for a definition of
separability). The model reduces to a separable one, though, when there is
no advection and diffusion, i.e., when both $\vect{\mu}$ and $\mat{\Sigma}$
are zero. In this case, the covariance function is given by
$C(t,\vect{s})=\frac{1}{2\zeta}\expo{-\zeta|t|}C(\vect s)$, where $C(\vect
s)$ denotes the spatial covariance function of the innovation process.

\subsection{Specification of the Innovation Process}\label{InnoProc}
It is assumed that the innovation process is white in
 time and spatially colored. In principle, one can choose
 any spatial covariance function such that the covariance function in
 \eqref{SPDECov} is finite at zero. Note that if $\widetilde{f}(\vect{k})$ is
 integrable, then $f(\omega,\vect{k})$ is also integrable. Similarly as \cite{LiLiRu10}, we opt for the
 most commonly used covariance function in spatial statistics: the Mat\'ern covariance
 function (see \cite{HaSt93}, \cite{st99}). Since in many applications the smoothness parameter is not estimable, we further restrict ourselves to the Whittle
 covariance function. This covariance function is of the form
 $\sigma^2 d/\rho_0 K_1\left(d/\rho_0\right)$ with $d$ being the Euclidean distance between two points and
$K_1\left(d/\rho_0\right)$ being the modified Bessel function of order
$1$. It is called after \cite{Wh54} who introduced it and argued
convincingly that it ``may be regarded as the 'elementary' correlation in two dimensions, similar to the exponential in one dimension.''. It can be shown that
the stationary solution of the SPDE 
\begin{equation}\label{SPDEInnov}
\left(\nabla\cdot\nabla-\frac{1}{\rho_0^2}\right) \epsilon(t,\vect{s})=\mathcal{W}(t,\vect{s}),
\end{equation}
where $\mathcal{W}(t,\vect{s})$ is a zero mean Gaussian white noise field with
variance $\sigma^2$, has the Whittle covariance function in space. From this, it follows that the spectrum of the process $\epsilon(t,\vect{s})$ is given by
\begin{equation}\label{WhittleSpec}
\widetilde{f}(\vect{k})=\frac{\sigma^2}{(2\pi)^2}\left(\vect{k}^T\vect{k}+\frac{1}{\rho_0^2}\right)^{-2}, ~~\rho_0>0,~\sigma>0.
\end{equation}
The parameter $\sigma^2$ determines the marginal variance of $\epsilon(t,\vect{s})$,
and $\rho_0$ is a spatial range parameter.

\subsection{Relation to an Integro-Difference Equation}
Assuming discrete time steps with lag $\Delta$, \cite{BrKa00} consider the following integro-difference equation (IDE)
\begin{equation}\label{IDE}
\xi(t,\vect{s})=\expo{-\Delta\zeta}\int_{\mathbb{R}^{2}}{h(\vect{s}-\vect{s}')\xi(t-\Delta,\vect{s}')d\vect{s}'}+\epsilon(t,\vect{s}),~~\vect{s}
\in \mathbb{R}^{2},
\end{equation}
with a Gaussian redistribution kernel
\begin{equation*}
h(\vect{s}-\vect{s}')=  (2\pi)^{-1}|2\Delta\mat{\Sigma}|^{-1/2}\exp\left(-(\vect{s}-\vect{s}'-\Delta\vect{\mu})^T(2\Delta\mat{\Sigma})^{-1}(\vect{s}-\vect{s}'-\Delta\vect{\mu})/2\right),
\end{equation*}
$\epsilon(t,\vect{s})$ being temporally independent and spatially
dependent. They show that in the limit $\Delta \to 0$, the solution of the
IDE and the one of the SPDE in \eqref{SPDE} coincide. The IDE is interpreted
as follows: the convolution kernel $h(\vect{s}-\vect{s}')$ determines the weight or the
 amount of influence that a location $\vect{s}'$ at previous time
 $t-\Delta$ has on the point $\vect{s}$ at current time $t$. This IDE
 representation provides an alternative way of interpreting the SPDE model and its parameters. \cite{StFrHi02} show under which conditions a dynamic model determined by an IDE as in \eqref{IDE} can be represented using a parametric joint space-time covariance
function, and vice versa. Based on the IDE in \eqref{IDE}, \cite{SiKuSt11} construct a spatio-temporal model for irregularly spaced data and apply it to obtain short term predictions of precipitation. \cite{Wi02} and \cite{XuWiFo05} also model spatio-temporal rainfall based on IDEs.

\section{\titleq{Solution in the Spectral Space}}\label{SpecSpace}
Solutions $\xi(t,\vect{s})$ of the SPDE \eqref{SPDE} are defined in
continuous space and 
time. In practice, one needs to discretize both space and time. The
resulting vector of $NT$ space-time points is in general of large
dimension.  This makes statistical inference, be it frequentist or Bayesian,
computationally difficult to impossible. However, as we show in the
following, solving the SPDE in the
spectral space alleviates the computational burden considerably and allows
for dimension reduction, if desired.

Heuristically speaking, spectral methods \citep[Chapter 7]{GoOr77,CrWi11} approximate the solution $\xi(t,\vect{s})$ by a linear combination of
deterministic spatial functions $\phi_j(\vect{s})$ with random coefficients $\alpha_j(t)$ that
evolve dynamically over time:
\begin{equation}\label{FreqApprox}
\begin{split}
\xi^K(t,\vect{s})&=\sum_{j=1}^K{\alpha_j(t)\phi_j(\vect{s})}=\vect{\phi}(\vect{s})^T\vect{\alpha}(t),
\end{split}
\end{equation}
where $\vect{\phi}(\vect{s})=(\phi_1(\vect{s}),\dots,\phi_K(\vect{s}))^T$
and $\vect{\alpha}(t)=(\alpha_1(t),\dots,\alpha_K(t))^T$. To be more
specific, we use Fourier functions
\begin{equation}\label{FourFunc}
\phi_j(\vect{s})=\exp{(\imag \vect{k}_j^T \vect{s})},
\end{equation} 
where  $\vect{k}_j =(k_j^x,k_j^y)^T$ is a spatial wavenumber. 


The advantages of using Fourier functions for solving linear, deterministic
PDEs are well known, see, e.g., \citet{Pe87}. First, differentiation in the
physical space corresponds to multiplication in the spectral space. In
other words, Fourier functions are eigenfunctions of the spatial
differential operator. Instead of approximating the differential operator
in the physical space and then worrying about approximation errors, one
just has to multiply in the spectral space, and there is no approximation
error of the operator when all the basis functions are retained. In addition, one can use the FFT for efficiently
transforming from the physical to the spectral space, and vice versa. 

Proposition \ref{PropMod} shows that Fourier functions are also useful for the
stochastic PDE \eqref{SPDE}: if the initial condition and the
innovation process are in the space spanned by a finite number of
Fourier functions, then the solution of the SPDE \eqref{SPDE} remains in
this space for all times and can be given in explicit form.
\begin{proposition}\label{PropMod}
Assume that the initial state and the innovation terms are of the form
 \begin{equation}\label{SPDEAss}
\xi^K(0,\vect{s}) =\vect{\phi}(\vect{s})^T\vect{\alpha}(0), \quad
\epsilon^K(t,\vect{s}) =\vect{\phi}(\vect{s})^T\widetilde{\vect{\epsilon}}(t)
\end{equation}
where $\vect{\phi}(\vect{s})=(\phi_1(\vect{s}),\dots,\phi_K(\vect{s}))^T$,
$\phi_j(\vect{s})$ is given in \eqref{FourFunc}, $\vect{\alpha}(0)\sim
N\left(\vect 0,\textnormal{diag}\left(\widetilde{f}_0(\vect{k}_j)\right)\right)$,
$\widetilde{f}_0(\cdot)$ being a spectral density, and $\widetilde{\vect{\epsilon}}(t)$ is a $K$-dimensional Gaussian white noise
independent of $\vect{\alpha}(0)$ with
\begin{equation}\label{SPDEAssStat}\mathrm{Cov}(\widetilde{\vect{\epsilon}}(t),\widetilde{\vect{\epsilon}}(t'))
  = \delta_{t,t'}\textnormal{diag}\left(
    \widetilde{f}(\vect{k}_j)\right),\end{equation} where $\widetilde{f}(\cdot)$ is a
spectral density and $\delta_{t,t'}$ the Kronecker delta function equaling $1$ if $t=t'$ and zero otherwise. Then the process $\xi^K(t,\vect{s}) =\vect{\phi}(\vect{s})^T\vect{\alpha}(t)$,
where the components $\alpha_j(t)$ are given by
\begin{equation}\label{alpha}
\alpha_j(t)= \expo{h_j t} \alpha_j(0) + \int_0^t \expo{h_j (t-\tint)}
\widetilde{\epsilon}_j(\tint) d\tint
\end{equation}
with $h_j = -\imag \vect{\mu}^T\vect{k}_j-\vect{k}_j^T\mat{\Sigma}\vect{k}_j - \zeta$, is a solution of the SPDE in \eqref{SPDE}. For $t \rightarrow
\infty$, the influence of the initial condition $\exp(h_j t) \alpha_j(0)$
converges to zero and the process $\xi^K(t,\vect{s})$ converges
to a time stationary Gaussian process with mean zero and 
$$\mathrm{Cov}(\xi^K(t+\Delta t,\vect{s}),\xi^K(t,\vect{s}')) =  
\vect{\phi}(\vect{s})^T \textnormal{diag}\left(\frac{-\expo{h_j\Delta t}\widetilde{f}(\vect{k}_j)}{h_j + h_j^\ast}\right) \vect{\phi}(\vect{s}')^\ast,$$
where $.^\ast$ stands for complex conjugation.
\end{proposition}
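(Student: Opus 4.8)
The plan is to verify that the proposed $\xi^K(t,\vect{s})=\vect{\phi}(\vect{s})^T\vect{\alpha}(t)$ solves the SPDE by substituting it into \eqref{SPDE}, and then to compute the limiting covariance by a direct integration. The crucial structural fact is that the Fourier functions diagonalize the spatial operators: since $\phi_j(\vect{s})=\expo{\imag\vect{k}_j^T\vect{s}}$, we have $\nabla\phi_j=\imag\vect{k}_j\,\phi_j$, hence $\vect{\mu}^T\nabla\phi_j=\imag\vect{\mu}^T\vect{k}_j\,\phi_j$ and $\nabla\cdot\mat{\Sigma}\nabla\phi_j=-\vect{k}_j^T\mat{\Sigma}\vect{k}_j\,\phi_j$. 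Consequently the right-hand spatial operator $\mathcal{L}=-\vect{\mu}^T\nabla+\nabla\cdot\mat{\Sigma}\nabla-\zeta$ acts on each $\phi_j$ as multiplication by exactly the scalar $h_j=-\imag\vect{\mu}^T\vect{k}_j-\vect{k}_j^T\mat{\Sigma}\vect{k}_j-\zeta$ defined in the statement. First I would plug the ansatz into \eqref{SPDE}: the left side is $\sum_j\dot\alpha_j(t)\phi_j(\vect{s})$, the deterministic part of the right side is $\sum_j h_j\alpha_j(t)\phi_j(\vect{s})$, and the innovation is $\epsilon^K(t,\vect{s})=\sum_j\widetilde\epsilon_j(t)\phi_j(\vect{s})$. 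Since the $\phi_j$ are linearly independent, matching coefficients reduces the SPDE to the decoupled system of scalar linear SDEs $\dot\alpha_j(t)=h_j\alpha_j(t)+\widetilde\epsilon_j(t)$.

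Next I would solve each scalar SDE. This is a standard linear equation whose solution, by the integrating-factor method (multiply by $\expo{-h_j t}$), is exactly \eqref{alpha}: $\alpha_j(t)=\expo{h_j t}\alpha_j(0)+\int_0^t\expo{h_j(t-\tint)}\widetilde\epsilon_j(\tint)\,d\tint$. One should note that $h_j$ is complex, but because the real Fourier basis pairs each $\vect{k}_j$ with $-\vect{k}_j$ and the coefficients are correspondingly conjugate, the reconstructed field $\xi^K$ is real; I would remark on this but not belabor it. The decay claim follows immediately since $\mathrm{Re}(h_j)=-\vect{k}_j^T\mat{\Sigma}\vect{k}_j-\zeta<0$ (as $\mat{\Sigma}$ is positive definite and $\zeta>0$), so $\expo{h_j t}\alpha_j(0)\to 0$ and only the stochastic integral survives in the stationary regime.

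For the stationary covariance I would compute $\mathrm{Cov}(\alpha_j(t+\Delta t),\alpha_{j'}(t))$ directly from the stochastic-integral representation. Using the white-noise covariance \eqref{SPDEAssStat}, the cross terms with $j\neq j'$ vanish and within a single mode one gets, in the limit $t\to\infty$, $\mathrm{Cov}(\alpha_j(t+\Delta t),\alpha_j(t))=\widetilde f(\vect{k}_j)\int_{-\infty}^{0}\expo{h_j(\Delta t-\tint)}\expo{h_j^\ast(-\tint)}\,d\tint$ (equivalently an Itô-isometry computation), which evaluates to $\expo{h_j\Delta t}\widetilde f(\vect{k}_j)\big/\big(-(h_j+h_j^\ast)\big)$; note $h_j+h_j^\ast=2\,\mathrm{Re}(h_j)<0$ so the sign is correct. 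Substituting into $\xi^K(t,\vect{s})=\vect{\phi}(\vect{s})^T\vect{\alpha}(t)$ and writing the covariance of the field as $\vect{\phi}(\vect{s})^T\mathrm{Cov}(\vect{\alpha}(t+\Delta t),\vect{\alpha}(t))\,\vect{\phi}(\vect{s}')^\ast$ gives the stated diagonal formula. I expect the main obstacle to be bookkeeping rather than conceptual: namely handling the complex-valued modes and the conjugation in the second factor consistently, and taking the $t\to\infty$ limit of the finite-time variance carefully so that the geometric/exponential integral converges to the stationary value. Once the per-mode computation is pinned down, assembling the matrix form is routine.
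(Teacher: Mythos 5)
Your proposal is correct and follows essentially the same route as the paper's own proof: exploit the fact that the Fourier functions are eigenfunctions of the spatial operator (so the SPDE decouples into the scalar equations $\dot\alpha_j = h_j\alpha_j + \widetilde\epsilon_j$, solved by \eqref{alpha}), then obtain the stationary covariance mode-by-mode from the white-noise isometry, yielding $-\expo{h_j\Delta t}\widetilde f(\vect{k}_j)/(h_j+h_j^\ast)$. The only cosmetic differences are that you derive \eqref{alpha} by an integrating factor where the paper verifies it by differentiation, and you write the limiting integral over $(-\infty,0]$ where the paper takes $t\to\infty$ of $\int_0^t$; also note your aside about realness via conjugate pairing is not needed here (and under the stated assumptions $\xi^K$ need not be real --- the paper treats real-valued fields separately via the real Fourier transform in Proposition \ref{PropRealDiscr}).
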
 
This result shows that the solution of the SPDE is exact over time, given the frequencies included. In contrast to finite
differences, one does not accumulate errors over time. This is related to the fact that there is no need for
numerical stability conditions. For statistical applications, where the
parameters are not known a priori, this is particularly useful. The approximation
error of $\xi^K(t,\vect{s})$ to the space time stationary solution of the SPDE in \eqref{SPDE} only depends on the number of spectral terms and not on the temporal
discretization, see also Proposition \ref{PropConv} below. Since
Fourier terms are global functions, stationarity in space, but not in time,
is a necessary assumption.

\begin{proof}
By \eqref{alpha}, we have
\begin{equation*}
\begin{split}
\frac{\partial}{\partial t}\xi^K(t,\vect{s})&=\sum_{j=1}^K{\dot{\alpha}_j(t)\phi_j(\vect{s})}=\sum_{j=1}^K{(h_j \alpha_j(t) + \widetilde{\epsilon}_j(t))\phi_j(\vect{s})}.
\end{split}
\end{equation*}
On the other hand, since the functions $\phi_j(\vect{s})=\exp{(\imag
  \vect{k}_j^T\vect{s})}$ are Fourier terms, differentiation in the
physical space corresponds to multiplication in the spectral space:
\begin{equation}\label{AdvecDiff}
\vect{\mu}^T\nabla\phi_j(\vect{s})=i\vect{\mu}^T\vect{k}_j\phi_j(\vect{s})
\end{equation}
and 
\begin{equation}
\nabla\cdot\mat{\Sigma}\nabla\phi_j(\vect{s})=-\vect{k}_j^T\mat{\Sigma}\vect{k}_j\phi_j(\vect{s}). 
\end{equation} 
Therefore, by the definition of $h_j$, 
\begin{equation*}
\left(-\vect{\mu}^T\nabla + \nabla\cdot\mat{\Sigma}\nabla -\zeta\right)
\sum_{j=1}^K \alpha_j(t)\phi_j(\vect{s}) =\sum_{j=1}^K
h_j \alpha_j(t) \phi_j(\vect{s}).
\end{equation*}
Together, we have 
$$\frac{\partial}{\partial t}\xi^K(t,\vect{s})= \left(-\vect{\mu}^T\nabla + \nabla\cdot\mat{\Sigma}\nabla -\zeta\right)\xi^K(t,\vect{s})+\epsilon^K(t,\vect{s})$$
which proves the first part of the proposition. Since the real part of $h_j$
is negative, $\exp(h_jt) \rightarrow 0$ for $t \rightarrow \infty$. 
Moreover, 
\begin{equation}
\begin{split} \lim\limits_{t \rightarrow \infty} \mathrm{Cov}(\alpha_j(t+\Delta t),\alpha_{j'}(t))&= \lim\limits_{t \rightarrow \infty} \expo{h_j\Delta t} \delta_{j,j'}\widetilde{f}(\vect{k}_j) \int_0^t \expo{-(h_j + h_{j'}^\ast)(t-\tint)}d\tint\\ &= -\frac{\expo{h_j\Delta t}}{h_j +
  h_j^\ast}\delta_{j,j'}\widetilde{f}(\vect{k}_j),
\end{split}
\end{equation}
and thus the last statement follows.
\end{proof}

We assume that the forcing term $\epsilon(t,.)$, the initial state $\xi(0,.)$, and consequently also the
solution $\xi(t,.)$, are stationary in space. Recall the Cram\'er
representation for a stationary field $\epsilon(t,.)$
$$\epsilon(t,\vect{s}) = \int \exp{(\imag \vect{k}^T \vect{s})} d \widetilde \epsilon_t(\vect{k})$$
where $\widetilde\epsilon_t$ has orthogonal increments 
$\mathrm{Cov}(d\widetilde\epsilon_t(\vect{k}),d\widetilde\epsilon_{t'}(\vect{l})) =  \delta_{t,t'}\delta_{\vect{k},\vect{l}} 
\widetilde{f}(\vect{k})$
and $\widetilde{f}(\cdot)$ is the spectral density of $\epsilon(t,.)$ (see, e.g., \cite{CrLe67}). This implies that we
can approximate any stationary field, in particular also the one with a
Whittle covariance function, by a finite linear combination of complex
exponentials, and the covariance of $\widetilde{\vect{\epsilon}}(t)$ is a
diagonal matrix as required in the proposition. Its entries are specified
in \eqref{WhittleSpec}. Concerning the initial state, one can use the
stationary distribution of $\xi(t,.)$. An alternative choice is to use the same
spatial distribution as for the innovations: $\widetilde{f}_0(\cdot)=\widetilde{f}(\cdot)$.

\subsection{Approximation bound}
By passing to the limit $K \rightarrow \infty$ such that both the wavenumbers
$\vect{k}_j$ cover the entire domain $\mathbb{R}^2$ and the distance
between neighboring wavenumbers goes to zero, we obtain from
\eqref{FreqApprox} the stationary (in space and time) solution with
spectral density as in \eqref{SPDESpec}. In practice, if one uses the
discrete Fourier transform (DFT), or its fast variant, the FFT, the wavenumbers are regularly spaced and the distance
between them is fixed for all $K$ (see below). This implies that the
covariance function of an approximate solution is periodic which is equivalent
to assuming a rectangular domain being wrapped around a torus. Since in
most applications, the domain is fixed anyway, this is a reasonable assumption.

Based on the above considerations, we assume, in the following, that $\vect
s \in [0,1]^2$ with periodic boundary condition, i.e., that
$[0,1]^2$ is wrapped on a
torus. In practice, to avoid spurious periodicity, we can apply
what is called ``padding''. This means that we take $\vect s \in [0,0.5]^2$ and then
embed it in $[0,1]^2$. As in the discrete Fourier transform, if we choose $\vect s \in [0,1]^2$, it follows that the spatial wavenumbers $\vect k_j$ lie on the $n\times
n$ grid given by $D_n=\{ 2\pi\cdot(i,j): -(n/2+1)\leq i,j \leq
n/2\}=\{-2\pi(n/2+1),\dots,2\pi n/2\}^2$ with $n^2 = N = K$, $n$ being an
even natural number. We then have
the following convergence result.
\begin{proposition}\label{PropConv}
When $N \rightarrow \infty$, the approximation $\xi^N(t,\vect{s})$ converges in law to the solution
$\xi(t,\vect{s})$ of the SPDE \eqref{SPDE} with $\vect s \in [0,1]^2$ wrapped on a
torus, and we have the bound
\begin{equation}
|C(t,\vect{s})-C^N(t,\vect{s})|\leq \sigma^2_{\xi}-\sigma^2_{\xi^N} ,
\end{equation}
where  $C(t,\vect{s})$ and $C^N(t,\vect{s})$ denote the covariance functions
of $\xi(t,\vect{s})$ and $\xi^N(t,\vect{s})$, respectively, and where
$\sigma^2_{\xi}=C(0,\vect{0})$ and $\sigma^2_{\xi^N}=C^N(0,\vect{0})$
denote the marginal variances of these two processes.
\end{proposition}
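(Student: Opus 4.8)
The plan is to represent both covariance functions as spectral sums over wavenumber lattices and then reduce everything to the nonnegativity of a single spectral weight. From Proposition \ref{PropMod}, the stationary covariance of the finite approximation is a sum over the $N$ wavenumbers in $D_n$,
\begin{equation*}
C^N(t,\vect{s}) = \sum_{\vect{k}_j \in D_n} g(\vect{k}_j,t)\,\expo{\imag \vect{k}_j^T\vect{s}}, \qquad g(\vect{k},t) = \frac{-\expo{h(\vect{k})t}\widetilde{f}(\vect{k})}{h(\vect{k})+h(\vect{k})^\ast},
\end{equation*}
with $h(\vect{k}) = -\imag\vect{\mu}^T\vect{k}-\vect{k}^T\mat{\Sigma}\vect{k}-\zeta$. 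On the torus $[0,1]^2$ the natural spectral lattice is $2\pi\mathbb{Z}^2$, and passing $K\to\infty$ while holding the grid spacing fixed turns the Cram\'er representation into a sum of the same form over all of $2\pi\mathbb{Z}^2$; this gives $C(t,\vect{s}) = \sum_{\vect{k}\in 2\pi\mathbb{Z}^2} g(\vect{k},t)\expo{\imag\vect{k}^T\vect{s}}$, so that $C-C^N$ is exactly the tail sum over the wavenumbers $\vect{k}\notin D_n$.

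The key algebraic observation I would make is that $h(\vect{k})+h(\vect{k})^\ast = -2(\vect{k}^T\mat{\Sigma}\vect{k}+\zeta)$ is real and strictly negative, so that the weight at lag zero,
\begin{equation*}
g(\vect{k},0) = \frac{\widetilde{f}(\vect{k})}{2(\vect{k}^T\mat{\Sigma}\vect{k}+\zeta)} \geq 0,
\end{equation*}
is nonnegative, and $\sigma^2_{\xi} = \sum_{\vect{k}\in 2\pi\mathbb{Z}^2} g(\vect{k},0)$ while $\sigma^2_{\xi^N} = \sum_{\vect{k}\in D_n} g(\vect{k},0)$. Next, since $|\expo{h(\vect{k})t}| = \expo{-(\vect{k}^T\mat{\Sigma}\vect{k}+\zeta)|t|}\leq 1$ and $|\expo{\imag\vect{k}^T\vect{s}}| = 1$, each tail term is dominated by its lag-zero value, $|g(\vect{k},t)\expo{\imag\vect{k}^T\vect{s}}| \leq g(\vect{k},0)$. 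Applying the triangle inequality to the tail sum then yields $|C(t,\vect{s})-C^N(t,\vect{s})| \leq \sum_{\vect{k}\notin D_n} g(\vect{k},0) = \sigma^2_{\xi} - \sigma^2_{\xi^N}$, which is the claimed bound.

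For the convergence in law I would note that both processes are zero-mean Gaussian, so convergence in law is equivalent to pointwise convergence of the covariances. The finiteness of $\sigma^2_{\xi}$, which is exactly the assumption that the covariance in \eqref{SPDECov} is finite at the origin, makes $\sum_{\vect{k}} g(\vect{k},0)$ a convergent series of nonnegative terms, so its partial sums $\sigma^2_{\xi^N}$ increase to $\sigma^2_{\xi}$ and, by the bound above (a Weierstrass-type argument), $\sup_{t,\vect{s}}|C(t,\vect{s})-C^N(t,\vect{s})|\to 0$ as $N\to\infty$. Hence $C^N\to C$ and $\xi^N$ converges in law to $\xi$. The main obstacle I anticipate is not the inequality, which is elementary once positivity is in hand, but justifying the passage to the torus: making precise that wrapping $[0,1]^2$ on a torus forces the spectral measure onto the discrete lattice $2\pi\mathbb{Z}^2$, that the limiting covariance $C$ is genuinely the lattice sum above rather than the continuous integral in \eqref{SPDESpec}, and that the diagonal innovation covariances attached to each $\vect{k}_j$ are the correct restrictions of $\widetilde{f}$. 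Once this identification is secured, the monotone structure of the nonnegative weights $g(\vect{k},0)$ delivers both the bound and the convergence simultaneously.
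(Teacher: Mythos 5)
Your proposal is correct and follows essentially the same route as the paper's proof: both represent $C$ and $C^N$ as spectral sums over $2\pi\mathbb{Z}^2$ and $D_n$ respectively, and bound the tail sum via the triangle inequality using $|\expo{h(\vect{k})t}|\leq 1$, $|\expo{\imag\vect{k}^T\vect{s}}|=1$, and the nonnegativity of $\widetilde{f}(\vect{k})/\bigl(2(\vect{k}^T\mat{\Sigma}\vect{k}+\zeta)\bigr)$. Your added remarks on convergence in law (Gaussianity plus uniform covariance convergence) and on identifying the torus spectrum with the lattice $2\pi\mathbb{Z}^2$ merely make explicit what the paper leaves implicit.
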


\begin{proof}
Similarly as in \eqref{SPDECov} and due to
$\vect{k} \in 2\pi\cdot \mathbb{Z}^2$, it follows that the covariance function
of $\xi(t,\vect{s})$ is given by
\begin{equation}
\begin{split}
C(t,\vect{s})=&\sum_{\vect{k} \in  2\pi\cdot \mathbb{Z}^2}\int
f(\omega,\vect{k})\expo{\imag t\omega}d\omega\exp(\imag\vect{s}'\vect{k}) \\
=&\sum_{\vect{k} \in  2\pi\cdot\mathbb{Z}^2}\widetilde{f}(\vect{k})\frac{-\expo{h_{\vect k}t}}{h_{\vect k}+h_{\vect k}^*}\exp(\imag\vect{s}'\vect{k}),
\end{split}
\end{equation}
where $h_{\vect k}= -\imag \vect{\mu}^T\vect{k}-\vect{k}^T\mat{\Sigma}\vect{k} - \zeta$.
From Proposition \ref{PropMod} we know that the approximate solution $\xi^N(t,\vect{s})$ has the covariance function
\begin{equation}
C^N(t,\vect{s})=\sum_{\vect{k} \in D_n}\widetilde{f}(\vect{k})\frac{-\expo{h_{\vect k}t}}{h_{\vect k}+h_{\vect k}^*}\exp(\imag\vect{s}'\vect{k}).
\end{equation}
It follows that
\begin{equation}
\begin{split}
|C(t,\vect{s})-C^N(t,\vect{s})|=&\left|\sum_{\vect{k} \in  2\pi\cdot \mathbb{Z}^2}\widetilde{f}(\vect{k})\frac{-\expo{h_{\vect
        k}t}}{h_{\vect k}+h_{\vect k}^*}(1-\mathbbmss{1}_{\{\vect k \in
    D_n\}} )\exp(\imag\vect{s}'\vect{k})\right|\\
\leq&\sum_{\vect{k} \in  2\pi\cdot \mathbb{Z}^2}\widetilde{f}(\vect{k})\frac{-1}{h_{\vect k}+h_{\vect k}^*}(1-\mathbbmss{1}_{\{\vect k \in D_n\}} )\\
=&\sigma^2_{\xi}-\sigma^2_{\xi^N}.
\end{split}
\end{equation}
\end{proof}
Not surprisingly, this result tells us that the rate of convergence
essentially depends on the smoothness properties of the process
$\xi(t,\vect{s})$, i.e., on how fast the spectrum decays. The smoother
$\xi(t,\vect{s})$, that is, the more variation is explained by low
frequencies, the faster is the convergence of the approximation.

Note that there is a conceptual difference between the stationary solution
of the SPDE \eqref{SPDE} with $\vect{s} \in \mathbb{R}^{2}$ and the
periodic one with
$\vect s \in [0,1]^2$ wrapped on a torus. For the sake of notational
simplicity, we have denoted both of them by $\xi(t,\vect{s})$. The finite
dimensional solution $\xi^N(t,\vect{s})$ is an
approximation to both of the above infinite dimensional solutions. The
above convergence result, though, only holds true for the solution
on the torus.

\subsection{Real Fourier Functions and Discretization in Time and Space}
To apply the model to real data, we have to discretize
it. In the following, we consider the process $\xi(t,\vect{s})$
on a regular grid of $n \times n =N$  spatial locations
$\vect{s}_1,\dots,\vect{s}_N$ in $[0,1]^2$ and at equidistant time points
$t_1,\dots,t_T$ with $t_i-t_{i-1}=\Delta$. Note that these two assumptions can be
easily relaxed, i.e., one can have irregular spatial observation locations and
non-equidistant time points. The former can be achieved by adopting a data
augmentation approach (see, for instance, \cite{SiKuSt11}) or by using an incidence matrix (see Section \ref{dimred}). The
latter can be done by taking a time varying $\Delta$.

For the sake of illustration, we have stated the results in the previous
section using complex Fourier functions. However, when discretizing the model, one
obtains a linear Gaussian state space model with a propagator matrix
$\mat{G}$ that contains complex
numbers, due to \eqref{AdvecDiff}. To avoid this, we replace the complex terms $\exp{(\imag \vect{k}_j^T\vect{s})}$ with real $\cos(\vect{k}_j^T\vect{s})$ and
$\sin(\vect{k}_j^T\vect{s})$ functions. In other words, we use the real instead of the
complex Fourier transform. The above results then still hold true, since for
real valued data, the real Fourier transform is equivalent to the complex
one. For notational simplicity, we will drop the superscript ``$^K$'' from
$\xi^K(t,\vect{s})$. The distinction between the approximation and
the true solution is clear from the context.

\begin{proposition}\label{PropRealDiscr}
On the above specified discretized spatial and temporal domain and using the
real Fourier transform, with initial state $\vect \alpha(t_0)\sim N(0,\widetilde{\mat{Q}}_0)$, $\widetilde{\mat{Q}}_0$ diagonal, a stationary solution of the SPDE \eqref{SPDE} is of the form
\begin{align}\label{SolFTVec}
\vect{\xi}(t_{i+1})&=\mat{\Phi}\vect{\alpha}(t_{i+1}),\\
\vect{\alpha}(t_{i+1})&=\mat{G}\vect{\alpha}(t_i)+\widetilde{\vect{\epsilon}}(t_{i+1}),~~\widetilde{\vect{\epsilon}}(t_{i+1})\sim
N(0,\widetilde{\mat{Q}}),\label{SolCoef}
\end{align}
with stacked vectors $\vect{\xi}(t_i)=(\xi(t_i,\vect s_1),\dots,\xi(t_i,\vect
s_N))^T$ and cosine and sine coefficients
$\vect{\alpha}(t_i)=\left(\alpha_1^{(c)}(t_i),\dots,\alpha_4^{(c)}(t_i),\alpha_5^{(c)}(t_i),\alpha_5^{(s)}(t_i),\dots,\alpha_{K/2+2}^{(c)}(t_i),\alpha_{K/2+2}^{(s)}(t_i)\right)^T,$
where $\mat{\Phi}$ applies the discrete, real Fourier transformation,
$\mat{G}$ is a block diagonal matrix with $2 \times 2$ blocks, and
$\widetilde{\mat{Q}}$ is a diagonal matrix. The above matrices are defined
as follows.
\begin{itemize}
\item
  $\mat{\Phi}=\left[\vect{\phi}(\vect{s}_1),\dots,\vect{\phi}(\vect{s}_N)\right]^T,$\\
  $\vect{\phi}(\vect{s}_l)=\left(\phi_1^{(c)}(\vect{s}_l),\dots,\phi_4^{(c)}(\vect{s}_l),\phi_5^{(c)}(\vect{s}_l),\phi_5^{(s)}(\vect{s}_l),\dots,\phi_{K/2+2}^{(c)}(\vect{s}_l),\phi_{K/2+2}^{(s)}(\vect{s}_l)\right)^T,$\\
  $\phi^{(c)}_j(\vect{s}_l)=\cos(\vect{k}_j^T\vect{s}_l),~~\phi^{(s)}_j(\vect{s}_l)=\sin(\vect{k}_j^T\vect{s}_l)$, $\l=1,\dots,n^2$

\item  $[\mat G]_{1:4,1:4}=\textnormal{diag}\left(\expo{ -\Delta
    (\vect{k}_j^T\mat{\Sigma}\vect{k}_j+ \zeta)}\right),$

$[\mat  G]_{5:K,5:K}=\textnormal{diag}\left(\expo{-\Delta(\vect{k}_j^T\mat{\Sigma}\vect{k}_j +
    \zeta)}\left(\cos(\Delta\vect{\mu}^T\vect{k}_j)\mat 1_2-\sin(\Delta\vect{\mu}^T\vect{k}_j)\mat{J}_2\right)\right),$
where 
\begin{equation}
\mat 1_2=\left(\begin{matrix} 1&0 \\0  &1 \end{matrix}\right), ~~\mat{J}_2=\left(\begin{matrix} 0&1 \\-1  &0 \end{matrix}\right),
\end{equation}
\item
  $\widetilde{\mat{Q}}=\text{\textnormal{diag}}\left(\widetilde{f}(\vect{k}_j)\frac{1-\expo{-2\Delta(\vect{k}_j^T\mat{\Sigma}\vect{k}_j
        + \zeta)}}{2(\vect{k}_j^T\mat{\Sigma}\vect{k}_j +
      \zeta)}\right),$
      
      \item $\widetilde{\mat{Q}}_0=(\mat 1_N-\mat G \mat G^T)^{-1}\widetilde{\mat{Q}}$.
\end{itemize}
\end{proposition}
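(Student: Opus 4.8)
The plan is to derive the discrete state space model by integrating the continuous-time coefficient dynamics of Proposition \ref{PropMod} over one time step and then rewriting everything in the real Fourier basis. First I would specialize \eqref{alpha} to the equidistant grid $t_{i+1}-t_i=\Delta$, which yields the one-step recursion
\begin{equation*}
\alpha_j(t_{i+1}) = \expo{h_j \Delta}\,\alpha_j(t_i) + \int_{t_i}^{t_{i+1}} \expo{h_j(t_{i+1}-\tint)}\,\widetilde{\epsilon}_j(\tint)\,d\tint
\end{equation*}
for the complex coefficients, with propagator $\expo{h_j\Delta} = \expo{-\Delta(\vect{k}_j^T\mat{\Sigma}\vect{k}_j + \zeta)}\,\expo{-\imag\Delta\vect{\mu}^T\vect{k}_j}$. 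This already separates the damping--diffusion modulus from the advection phase, which is exactly what Euler's formula will turn into the scaling and rotation parts of $\mat G$.

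The central step is the complex-to-real conversion. For real-valued fields the complex coefficients occur in conjugate pairs, each pair being encoded by one cosine and one sine coefficient. Multiplication of a complex coefficient by $\expo{h_j\Delta}=r(\cos\theta-\imagND\sin\theta)$, with $r=\expo{-\Delta(\vect{k}_j^T\mat{\Sigma}\vect{k}_j+\zeta)}$ and $\theta=\Delta\vect{\mu}^T\vect{k}_j$, acts on the real pair $(\alpha_j^{(c)},\alpha_j^{(s)})^T$ as the $2\times2$ matrix $r(\cos\theta\,\mat 1_2-\sin\theta\,\mat{J}_2)$, which is precisely the claimed block $[\mat G]_{5:K,5:K}$; here I would carefully fix the sign convention for the sine coefficient to pin down the orientation. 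The four self-conjugate wavenumbers (the corners $(0,0)$, $(n/2,0)$, $(0,n/2)$, $(n/2,n/2)$ scaled by $2\pi$, which exist because $n$ is even) require separate treatment: there the sine basis function vanishes identically on the grid, so the advection operator $\vect{\mu}^T\nabla$ produces no representable contribution and the propagator collapses to the scalar $\expo{-\Delta(\vect{k}_j^T\mat{\Sigma}\vect{k}_j+\zeta)}$, giving $[\mat G]_{1:4,1:4}$.

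For the innovation covariance $\widetilde{\mat{Q}}$ I would compute the variance of the stochastic increment. Using the orthogonal-increments property of the spectral noise and $|\expo{h_j v}|^2=\expo{-2(\vect{k}_j^T\mat{\Sigma}\vect{k}_j+\zeta)v}$, the variance of the complex increment is $\widetilde{f}(\vect{k}_j)\int_0^\Delta\expo{-2(\vect{k}_j^T\mat{\Sigma}\vect{k}_j+\zeta)v}\,dv$, which integrates to the stated diagonal entry; one then checks that the real and imaginary parts of the increment are uncorrelated and of equal variance, so each block of $\widetilde{\mat{Q}}$ is a scalar multiple of $\mat 1_2$ and $\widetilde{\mat{Q}}$ is diagonal. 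Finally, $\widetilde{\mat{Q}}_0$ is the stationary covariance of the linear recursion \eqref{SolCoef}, i.e.\ the solution of the discrete Lyapunov equation $\mat P=\mat G\mat P\mat G^T+\widetilde{\mat{Q}}$. The step I expect to be the main obstacle, and the reason the answer is as simple as stated, is the observation that each $2\times2$ block of $\mat G$ is a scalar multiple of an orthogonal rotation matrix, so $\mat G\mat G^T$ is diagonal; the otherwise unwieldy Lyapunov solution then collapses and one verifies directly, block by block, that $\mat P=(\mat 1_N-\mat G\mat G^T)^{-1}\widetilde{\mat{Q}}$ solves the equation. Keeping the normalization factors of the real Fourier transform consistent across $\mat{\Phi}$, $\widetilde{\mat{Q}}$, and $\widetilde{\mat{Q}}_0$ is the part demanding the most care.
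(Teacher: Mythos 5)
Your proposal is correct: the block propagator, the innovation variance, and the Lyapunov identity for $\widetilde{\mat{Q}}_0$ all come out as stated, and you rightly isolate the two delicate points (the four self-conjugate wavenumbers and the real-transform normalization). Your route is, however, genuinely different in organization from the paper's. The paper never passes back through the complex recursion: it re-derives the continuous-time solution directly in the real basis, recording how $\vect{\mu}^T\nabla$ and $\nabla\cdot\mat{\Sigma}\nabla$ act on $\cos(\vect{k}_j^T\vect{s})$ and $\sin(\vect{k}_j^T\vect{s})$, assembling for each cosine--sine pair the coupled system with matrix $\mat{H}_j=-(\vect{k}_j^T\mat{\Sigma}\vect{k}_j+\zeta)\mat 1_2-\vect{\mu}^T\vect{k}_j\mat{J}_2$, and computing $e^{\mat{H}_j\Delta}$ from the commutativity of $\mat 1_2$ and $\mat{J}_2$; the innovation covariance is then the matrix integral $\int_0^{\Delta}e^{\mat{H}_j(\Delta-\tint)}\widetilde{f}(\vect{k}_j)e^{\mat{H}'_j(\Delta-\tint)}d\tint$, which collapses to a multiple of $\mat 1_2$ because the rotation factors cancel against their transposes. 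You instead keep the complex one-step recursion from \eqref{alpha} and translate multiplication by $\expo{h_j\Delta}$ into the action of $r\left(\cos\theta\,\mat 1_2-\sin\theta\,\mat{J}_2\right)$ on conjugate-pair coefficients---the same algebra, read through the isomorphism between $\mathbb{C}$ and the matrices $a\mat 1_2+b\mat{J}_2$. What the paper's direct derivation buys is uniformity: the cosine-only terms and the paired terms are handled by one collocation argument (the advection image is a sine that vanishes at every grid point for the four special wavenumbers), and no complex-to-real dictionary with its sign and scaling conventions is needed. What your transfer buys is explicit reuse of Proposition \ref{PropMod} and a sharper view of why the self-conjugate frequencies are dangerous: a naive projection of the complex dynamics onto the forced-real coefficient would yield the wrong propagator $r\cos\theta$, which your separate grid-based treatment correctly avoids. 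Finally, your block-by-block verification that $(\mat 1_N-\mat G\mat G^T)^{-1}\widetilde{\mat{Q}}$ solves the stationarity equation makes explicit exactly what the paper asserts but defers---the diagonality of $\mat G\mat G^T$ is only proved later, in the discussion of the spectral Kalman filter.
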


In summary, at each time point $t$ and spatial point $\vect s_l$, $l=1,\dots,n^2$, the solution $\xi(t,\vect{s}_l)$ is
the discrete real Fourier transform of the random coefficients $\vect{\alpha}(t)$
\begin{equation}\label{SolFT}
  \begin{split}
\xi(t,\vect{s}_l) &=
\sum_{j=1}^{4}\alpha^{(c)}_j(t)\phi^{(c)}_j(\vect{s}_l)+\sum_{j=5}^{K/2+2}{\left(\alpha^{(c)}_j(t)\phi^{(c)}_j(\vect{s}_l)+\alpha^{(s)}_j(t)\phi^{(s)}_j(\vect{s}_l)\right)}\\
&=\vect{\phi}(\vect{s}_l)^T\vect{\alpha}(t),
\end{split} 
\end{equation}
and the Fourier coefficients $\vect{\alpha}(t)$ evolve dynamically over
time according to the vector autoregression in \eqref{SolCoef}. The first four terms are cosine terms and, afterward, there are
cosine - sine pairs. This is a peculiarity of the real Fourier
transform. It is due to the fact that for four wavenumbers $\vect
k_j$, the
sine terms equal zero on the grid, i.e.,
$\sin(\vect k_j^T\vect s_l)=0$, for all $l=1,\dots,n^2$ and $\vect
k_j\in \{(0,0)^T,(0,n \pi)^T,(n\pi,0)^T,(n\pi,n\pi)^T\}$ (see Figure \ref{fig:WaveIllus}). The above equations \eqref{SolFTVec} and \eqref{SolCoef} form a linear Gaussian state space model with
parametric propagator matrix $\mat{G}$ and innovation covariance matrix
$\widetilde{\mat{Q}}$, the parametrization being determined by the
corresponding SPDE.

The model in \eqref{SolFTVec} and \eqref{SolCoef} is similar to the one
discussed in \citet[Chapter 7]{CrWi11}, but the derivation as an exact solution
to the stochastic PDE \eqref{SPDE} rather than a deterministic PDE is different.

\begin{proof}
Similarly as in Proposition \ref{PropMod}, we first derive the continuous time
solution. Using 
\begin{equation*}
\vect{\mu}^T\nabla\phi^{(c)}_j(\vect{s}_l)=-\vect{\mu}^T\vect{k}_j\phi^{(s)}_j(\vect{s}_l),~~ \vect{\mu}^T\nabla\phi^{(s)}_j(\vect{s}_l)=\vect{\mu}^T\vect{k}_j\phi^{(c)}_j(\vect{s}_l),
\end{equation*}
\begin{equation*}
\nabla\cdot\mat{\Sigma}\nabla\phi^{(c)}_j(\vect{s}_l)=-\vect{k}_j^T\mat{\Sigma}\vect{k}_j\phi^{(c)}_j(\vect{s}_l),~~ \nabla\cdot\mat{\Sigma}\nabla\phi^{(s)}_j(\vect{s}_l)=-\vect{k}_j^T\mat{\Sigma}\vect{k}_j\phi^{(s)}_j(\vect{s}_l),
\end{equation*} 
and the same arguments as in the proof of Proposition \ref{PropMod}, it
follows that the continuous time solution is of the form \eqref{SolFT}. For
each pair of cosine - sine coefficients $\vect
\alpha_j(t)=(\alpha^{(c)}_j(t),\alpha^{(s)}_j(t))^T$ we have
\begin{equation}\label{RealSol}
\vect \alpha_j(t)=e^{\mat{H}_j t}\vect
\alpha_j(0)+\int_0^te^{\mat{H}_j (t-\tint)} \widetilde{\vect{\epsilon}}_j(\tint)d\tint,
\end{equation}
where
\begin{equation*}
\mat{H}_j=\left(\begin{matrix} -\vect{k}_j^T\mat{\Sigma}\vect{k}_j - \zeta&-\vect{\mu}^T\vect{k}_j \\\vect{\mu}^T\vect{k}_j  &-\vect{k}_j^T\mat{\Sigma}\vect{k}_j - \zeta \end{matrix}\right).
\end{equation*}
Now $\mat{H}_j$ can be written as
\begin{equation*}
\mat{H}_j=(-\vect{k}_j^T\mat{\Sigma}\vect{k}_j - \zeta) \mat  1_2 -
\vect{\mu}^T\vect{k}_j  \mat{J_2}, 
\end{equation*}
where 
\begin{equation*}
\mat 1_2=\left(\begin{matrix} 1&0 \\0  &1 \end{matrix}\right), ~~\mat{J}_2=\left(\begin{matrix} 0&1 \\-1  &0 \end{matrix}\right).
\end{equation*}
Since $\mat 1_2$ and $\mat{J}_2$ commute, we have
\begin{equation}\label{propPairs}
\begin{split}
e^{\mat{H}_j t}=&\expo{-t(\vect{k}_j^T\mat{\Sigma}\vect{k}_j+ \zeta)\mat 1_2}
\expo{-t\vect{\mu}^T\vect{k}_j\mat{J}_2}\\
=&\expo{-t(\vect{k}_j^T\mat{\Sigma}\vect{k}_j + \zeta)}\left(\cos(t\vect{\mu}^T\vect{k}_j)\mat 1_2-\sin(t\vect{\mu}^T\vect{k}_j)\mat{J}_2\right).
\end{split}
\end{equation}
For the calculation of the exponential function of the matrix $\mat{J}_2$,
see, e.g., \citet[Chapter 4]{Br07}.

Analogously, one derives for the first four cosine terms
\begin{equation}\label{propCosine}
 \alpha_j^c(t)=e^{-\left(\vect{k}_j^T\mat{\Sigma}\vect{k}_j + \zeta\right) t}
\alpha_j^c(0)+\int_0^te^{-\left(\vect{k}_j^T\mat{\Sigma}\vect{k}_j +
    \zeta\right)(t-\tint)} \widetilde{\epsilon}_j(\tint)d\tint, ~~j=1,\dots 4.
\end{equation}
The above expression \eqref{propPairs} and \eqref{propCosine} give the
propagator matrix $\mat G$.

For the discrete time solution, in addition to the propagation $$\vect \alpha_j(t+\Delta)=e^{\mat{H}_j \Delta}\vect
\alpha_j(t),$$ we need to calculate the covariance of the integrated
stochastic innovation term $$\int_t^{t+\Delta}e^{\mat{H}_j
  (t+\Delta-\tint)} \widetilde{\vect{\epsilon}}_j(\tint)d\tint .$$
 This is calculated as
\begin{equation*}
\begin{split}
\int_t^{t+\Delta}e^{\mat{H}_j (t+\Delta-\tint)} \widetilde{f}(\vect{k}_j)e^{\mat{H}'_j
  (t+\Delta-\tint)}d\tint&=\int_0^{\Delta}e^{\mat{H}_j (\Delta-\tint)} \widetilde{f}(\vect{k}_j)e^{\mat{H}'_j
  (\Delta-\tint)}d\tint\\
&=\int_0^{\Delta}\widetilde{f}(\vect{k}_j)\expo{-2(\vect{k}_j^T\mat{\Sigma}\vect{k}_j+ \zeta)(\Delta-\tint)}\mat 1_2d\tint\\
&=\widetilde{f}(\vect{k}_j)\frac{1-\expo{-2(\vect{k}_j^T\mat{\Sigma}\vect{k}_j+ \zeta)\Delta}}{2(\vect{k}_j^T\mat{\Sigma}\vect{k}_j +
    \zeta)}\mat 1_2.
\end{split}
\end{equation*}
For the first four cosine terms, calculations are done analogously. The covariance matrix $\widetilde{\mat{Q}}_0$ of the initial state $\vect \alpha(t_0)$ is assumed to be the covariance matrix of the stationary distribution of $\vect \alpha(t_i)$. Note that  $\widetilde{\mat{Q}}_0$  is diagonal since $\mat G \mat G^T$ is diagonal, see the proof of Algorithm \ref{skf} in Section \ref{kfbsss}. This then gives the result in \eqref{SolFTVec} and \eqref{SolCoef}.
\end{proof}

\begin{figure}[!ht]
\begin{center}
\includegraphics[width=0.5\textwidth]{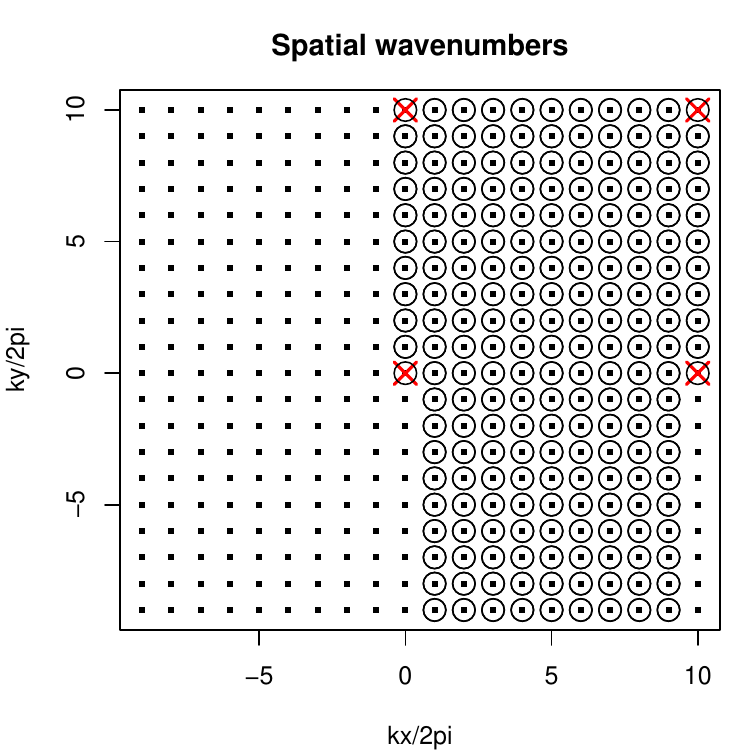} 
\end{center}
\caption{Illustration of spatial wavenumbers for the two-dimensional
  discrete real Fourier transform with $n^2=400$ grid points.} 
\label{fig:WaveNumbersIllus}
\end{figure}
The discrete complex Fourier transform uses $n^2$ different wavenumbers $\vect k_j$ each
having a corresponding Fourier term $\exp(\imag \vect k_j^T \vect s)$. The real
Fourier transform, on the other hand, uses $n^2/2 +2$ different wavenumbers,
where four of them have only a cosine term and the others each have sine
and cosine terms. This follows from the fact that, for real data, certain coefficients of the complex transform
are the complex transpose of other coefficients. For technical details on
the real Fourier transform, we refer to \cite{DuMe84},
\cite{BoTaHa84}, \cite{RoWi05}, and \cite{Pa07}. Figure
\ref{fig:WaveNumbersIllus} illustrates an example of the spatial
wavenumbers, with $n^2=20\times 20=400$ grid points. The dots with a circle represent the
wavenumbers actually used in the real Fourier transform, and the red crosses mark the wavenumbers
having only a cosine term. Note that in \eqref{SolFT} we choose to order
the spatial wavenumbers such that the first four spatial wavenumbers
correspond to the cosine-only terms.
\begin{figure}[!ht]
\begin{center}
\includegraphics[width=\textwidth]{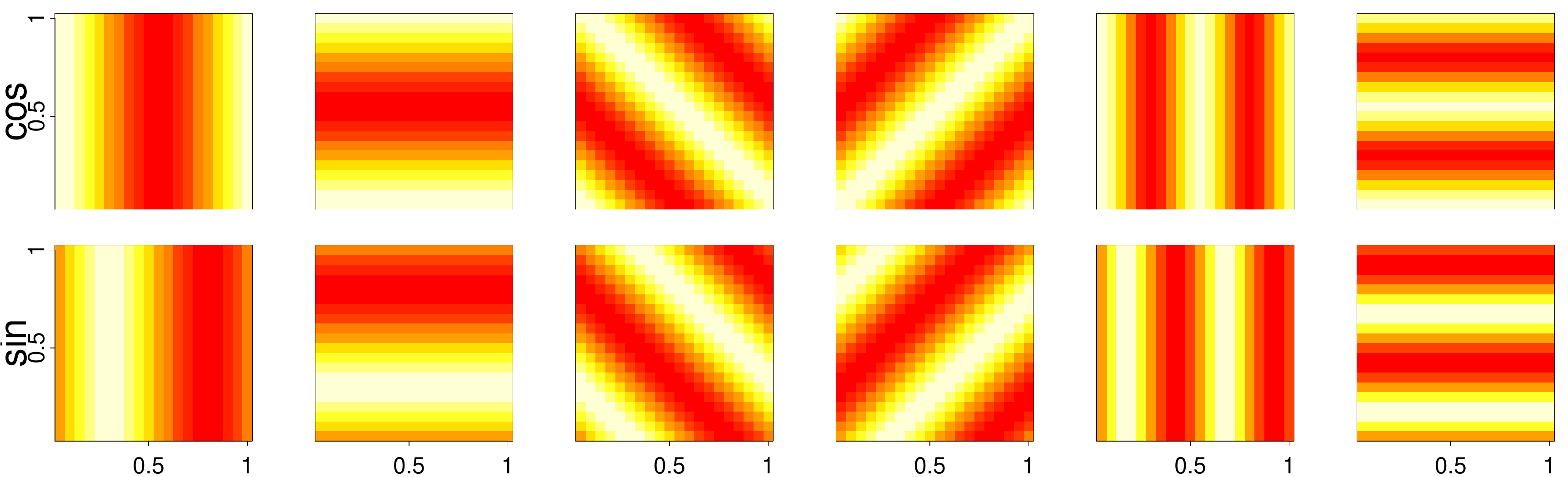} 
\end{center}
\caption{Illustration of two dimensional Fourier basis functions used in
  the discrete real Fourier transform with $n^2=400$. On the x- and y-axis
  are the coordinates of $\vect{s}$.} 
\label{fig:WaveIllus}
\end{figure}
To get an idea of what the basis functions $\cos{(\vect{k}_j^T\vect{s})}$ and
$\sin{(\vect{k}_j^T\vect{s})}$ look like, we plot in Figure
\ref{fig:WaveIllus} twelve low-frequency basis functions corresponding to the
six spatial frequencies closest to the origin $\vect 0$.
\begin{figure}[!ht]
  \centering
\includegraphics[width=0.5\textwidth]{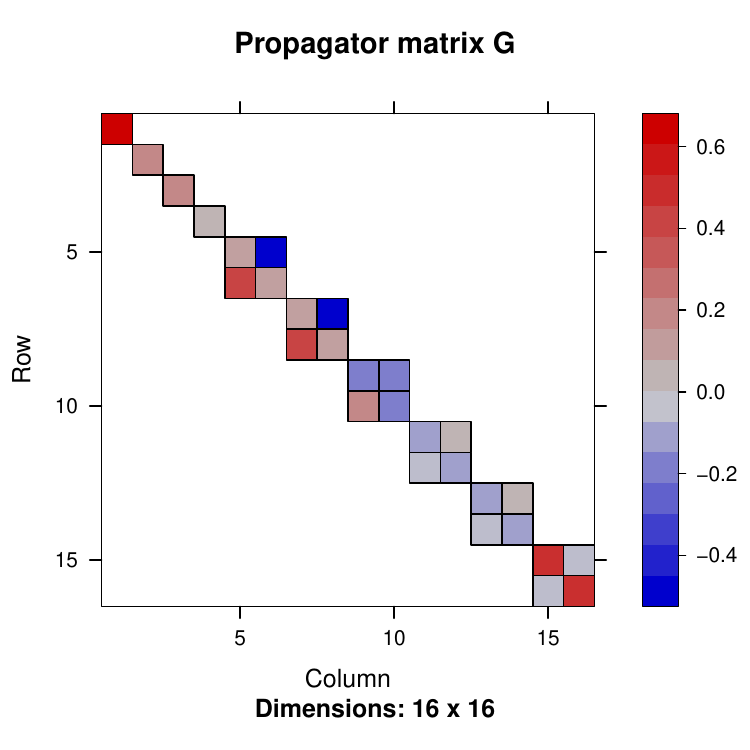}
\caption{Illustration of propagator matrix $\mat G$. 16 real Fourier functions
  are used ($n=4$).} 
\label{fig:Propagator}
\end{figure}
Further, in Figure \ref{fig:Propagator}, there is an example of a
propagator matrix $\mat G$ when $n=4$, i.e., when sixteen ($4^2$) spatial
basis functions are used. The upper left $4 \times 4$ diagonal matrix
corresponds to the cosine-only frequencies. The $2 \times 2$
blocks following correspond to wavenumbers with cosine - sine pairs.

Concerning notation in this paper, $K$ refers to the
number of Fourier terms, i.e., this is the dimension of the spectral
process $\vect \alpha(t)$ at each time $t$. Furthermore, $N$ denotes the number of
points at which the process $\vect \xi(t)$ is modeled, and $n$ is the
number of points on each axis of the quadratic grid used. Often, we have
$n^2=N=K$. However, if one uses a reduced dimensional Fourier basis, $K$ is smaller than $N$, see Section \ref{dimred}.



\subsection{Remarks on Finite Differences}\label{SpecvsFD}
Another approach to solve PDEs or SPDEs such as the one in
\eqref{SPDE} consists of using a discretization such as finite differences.  \cite{StEtAl10} use finite differences to solve an advection-diffusion PDE. Other examples are \cite{Wi03}, \cite{XuWi07},
\cite{DuGeSi}, \cite{MaEtAl08}, and \cite{ZhAu10}. The finite difference
approximation, however, has several disadvantages. First, each spatial
discretization effectively implies an interaction structure between
temporal and spatial correlation. In other words, as \cite{XuWiFo05} state, the discretization effectively suggests a
knowledge of the scale of interaction, lagged in time. Usually, this
space-time covariance interaction structure is not known, though. Furthermore, there
are numerical stability conditions that need to be fulfilled so that
the approximate solution is meaningful. Since these conditions depend on
the values of the unknown parameters, one can run into problems.

In addition, computational tractability is an issue. In fact, we have tried
to solve the SPDE in \eqref{SPDE} using finite differences as described in
the following. A finite difference approximation in \eqref{SPDE} leads to a vector
autoregressive model with a sparse propagator matrix being determined by the
discretization.  The innovation term $\epsilon$ can be approximated using a Gaussian
Markov random field with sparse precision matrix (see
\cite{LiLiRu10}). Even though the propagator and the precision matrices of
the innovations are sparse, we have run into a computational bottleneck
when using the Forward Filtering Backward Sampling (FFBS) algorithm
\citep{CaKo94, Fr94} for fitting the model. The basic
problem is that the Kalman gain is eventually a dense matrix. Alternative
sampling schemes like the information filter (see, e.g., \cite{AnMo79} and
\cite{ViFe09}) did not solve the problem either. However, future research
on this topic might come up with solutions.
 
\section{Computationally Efficient Statistical Inference}\label{inference}
The computational cost for one evaluation of the likelihood or one sample
from the full conditional in a spatio-temporal model with $T$ time points and $N$ spatial points equals
$O((NT)^3)$ when taking a naive approach. Using the Kalman filter or the Forward Filtering Backward
Sampling (FFBS) algorithm \citep{CaKo94, Fr94}, depending on what is
needed, this cost is reduced to $O(T N^3)$ which, generally, is still too
high for large data sets. In the following, we show how evaluation of the likelihood and sampling from the full conditional of the latent process can be
done efficiently in $O(TN \log N)$ operations. In the spectral space, the
costs of the algorithms grow linearly in the dimension $TN$, which means that
the total computational costs are
dominated by the costs of the fast Fourier transform (FFT) \citep{CoTu65}
which are $O(TN \log N)$. Furthermore, computational time
can be reduced by running the $T$ different FFTs in parallel. 

As is often done in a statistical model, we add a non-structured Gaussian term $\nu(t_{i+1},\vect
s) \sim N(0,\tau^2), iid,$ to \eqref{SolFTVec} to account for small scale variation and / or
measurement errors. In geostatistics, this term is called nugget
effect. Denoting the observations at time $t_i$ by $\vect{w}(t_i)$, we then have the following linear Gaussian state space model:
\begin{align}\label{GausObs}
\vect{w}(t_{i+1})&=\mat{\Phi}\vect{\alpha}(t_{i+1})+\vect
\nu(t_{i+1}), &\vect \nu(t_{i+1})\sim
N(0,\tau^2 \mat 1_N),\\
\vect{\alpha}(t_{i+1})&=\mat{G}\vect{\alpha}(t_i)+\widetilde{\vect{\epsilon}}(t_{i+1}),&\widetilde{\vect{\epsilon}}(t_{i+1})\sim
N(0,\widetilde{\mat{Q}}).\nonumber
\end{align}
Note that $\vect{\xi}(t_{i+1})=\mat{\Phi}\vect{\alpha}(t_{i+1})$. As
mentioned before, irregular spatial data can be modeled by adopting a data
augmentation approach (see \cite{SiKuSt11}) or by using an incidence matrix (see Section \ref{dimred}). For the sake of simplicity, a zero mean was assumed. Extending the model by
including covariates in a regression term is straightforward. Furthermore, we assume
normality. The model can be easily generalized to allow for data not following a
Gaussian distribution. For instance, this can be done by
including it in a Bayesian hierarchical model (BHM) \citep{WiBeCr98} and
specifying a non-Gaussian distribution for $\vect{w}|\vect{\xi}$. The
posterior can then no longer be evaluated exactly. But
approximate posterior probabilities can still be computed using, for instance, simulation
based methods such as Markov chain Monte Carlo (MCMC) (see, e.g.,
\cite{GilWRS96} or \cite{RobCC04}). An additional
advantage of BHMs is that these models can be extended, for
instance, to account for temporal non-stationarity by letting one or several
parameters vary over time.

\subsection{Kalman Filtering and Backward Sampling in the Spectral Space}\label{kfbsss}
When following both a frequentist or a Bayesian paradigm, it is crucial that one is able to evaluate the
likelihood of the hyper-parameters given $\vect w$ with a reasonable computational effort. In addition, when doing Bayesian
inference, one needs to be able to simulate efficiently from the full
conditional of the latent process $[\vect \xi|\cdot]$, or, equivalently,
the Fourier coefficients $[\vect \alpha|\cdot]$. Below, we show
how both these tasks can be done in the spectral space in linear time,
i.e., using $O(TN)$ operations. For transforming between the physical and
spectral space, one can use the FFT which requires $O(TN \log N)$
operations. We start with the spectral version of the Kalman filter. Its
output is used for both evaluating the log-likelihood and for simulating
from the full conditional of the coefficients $\vect \alpha$. 

\begin{algorithm}
\caption{Spectral Kalman filter}\label{skf}
  \textbf{Input:} $T$,$\widetilde{\vect w}$, $\mat G$, $\tau^2$,
  $\widetilde{\mat{Q}}$, $\mat F$ \\
\textbf{Output:} forecast and filter means $\vect m_{t_i|t_{i-1}}$, $\vect
m_{t_i|t_{i}}$ and covariance matrices $\mat
R_{t_i|t_{i}}$, $\mat R_{t_i|t_{i-1}}$, $i=1,\dots,T$

\begin{algorithmic}
\STATE $ \vect m_{t_0|t_0}= \vect 0$
\STATE $\mat R_{t_0|t_0}=\widetilde{\mat{Q}}$
\FOR{$i=1$ to $T$}
\STATE  $\vect m_{t_i|t_{i-1}}=\mat G \vect m_{t_{i-1}|t_{i-1}}$
\STATE $\mat R_{t_i|t_{i-1}}=\widetilde{\mat{Q}} + \mat R_{t_{i-1}|t_{i-1}}
\mat F$
\STATE $\mat R_{t_i|t_{i}}=\left(\tau^{-2}\mat 1_N+\mat R_{t_i|t_{i-1}}^{-1}\right)^{-1}$
\STATE  $\vect m_{t_i|t_i}=\vect m_{t_i|t_{i-1}}+\tau^{-2}\mat
R_{t_i|t_i}\left( \widetilde{\vect w}(t_i)- \vect m_{t_i|t_{i-1}}\right)$
\ENDFOR
\end{algorithmic}
\end{algorithm}
Algorithm \ref{skf} shows the Kalman filter in the spectral space. For the sake of simplicity, we assume that the initial distribution equals the
innovation distribution. The spectral Kalman filter has as input the Fourier transform of
$\widetilde{\vect w}=(\widetilde{\vect w}(t_1)^T,\dots,\widetilde{\vect
  w}(t_T)^T)^T$ of $\vect w$, the diagonal matrix $\mat F$ given by
\begin{equation}\label{Hdef}
\begin{split}
[\mat F]_{1:4,1:4}&=\textnormal{diag}\left(\expo{ -2\Delta
    (\vect{k}_j^T\mat{\Sigma}\vect{k}_j+ \zeta)}\right),\\
[\mat F]_{5:N,5:N}&=\textnormal{diag}\left(\expo{-2\Delta(\vect{k}_j^T\mat{\Sigma}\vect{k}_j +
    \zeta)}\mat 1_2\right),
\end{split}
\end{equation}
  and other parameters that characterize the SPDE
model. It returns forecast and filter means $\vect
m_{t_i|t_{i-1}}$ and $\vect m_{t_i|t_{i}}$ and covariance matrices
$\mat R_{t_i|t_{i}}$ and $\mat R_{t_i|t_{i-1}}$, $i=1,\dots,T$,
respectively. I.e., $\vect m_{t_i|t_{i}}$ and $\mat R_{t_i|t_{i}}$ are the
mean and the covariance matrix of $\vect \alpha (t_i)$ given data up to
time $t_i$ $\{\vect
w(t_j)|j=1,\dots,i\}$. Analogously,  $m_{t_i|t_{i-1}}$ and $\mat
R_{t_i|t_{i-1}}$ are the forecast mean and covariance matrix given data up
to time $t_{i-1}$. We follow the notation of \cite{Ku01}. 

Since the matrices $\widetilde{\mat{Q}}$ and $\mat F$ are diagonal, the
covariance matrices $\mat R_{t_i|t_{i}}$ and $\mat R_{t_i|t_{i-1}}$ are
also diagonal. Note that the matrix notation in Algorithm \ref{skf} is used
solely for illustrational purpose. In practice, matrix vector products
($\mat G \vect m_{t_{i-1}|t_{i-1}}$), matrix multiplications ($\mat
R_{t_{i-1}|t_{i-1}}\mat F$), and matrix inversions $(\tau^{-2}+\mat
R_{t_i|t_{i-1}})^{-1}$ are not calculated with general purpose algorithms
but elementwise since all matrices are diagonal or $2 \times 2$ block diagonal. It follows that the
computational cost for this algorithm is $O(TN)$.

The derivation of this algorithm follows from the classical Kalman filter
(see, e.g., \cite{Ku01}) using $\mat \Phi' \mat \Phi =\mat 1_N$, $\mat G\mat
R_{t_{i-1}|t_{i-1}} \mat G^T=\mat R_{t_{i-1}|t_{i-1}} \mat G \mat G^T$, and the fact
that $\mat G \mat G^T=\mat F$. The first equation holds true due to the
orthonormality of the discrete Fourier transform. The second equation
follows from the fact that $\mat G$ is $2 \times 2$ block diagonal and that $\mat
R_{t_{i-1}|t_{i-1}}$ is diagonal with the diagonal entries being equal for each cosine - sine pair. The last equation holds true as shown in the
following. Being obvious for the first four frequencies, we consider the
$2 \times 2$ diagonal blocks of cosine - sine pairs:
\begin{equation*}
\begin{split}
&[\mat G]_{(2l-5):(2l-4),(2l-5):(2l-4)}[\mat
G]_{(2l-5):(2l-4),(2l-5):(2l-4)}^T\\ &=\expo{-2\Delta(\vect{k}_j^T\mat{\Sigma}\vect{k}_j +
    \zeta)}\left(\cos(\Delta\vect{\mu}^T\vect{k}_j)\mat 1_2-\sin(\Delta\vect{\mu}^T\vect{k}_j)\mat{J}_2\right)\left(\cos(\Delta\vect{\mu}^T\vect{k}_j)\mat 1_2-\sin(\Delta\vect{\mu}^T\vect{k}_j)\mat{J}_2\right)^T\\
&=\expo{-2\Delta(\vect{k}_j^T\mat{\Sigma}\vect{k}_j + \zeta)}\left(\cos(\Delta\vect{\mu}^T\vect{k}_j)^2+\sin(\Delta\vect{\mu}^T\vect{k}_j)^2 \right)\mat{1}_2,
\end{split}
\end{equation*}
$l=5,\dots, N/2+2$, which equals \eqref{Hdef}. In the last equation we have used
$$\mat J_2^T=-\mat J_2 ~~\textnormal{and}~~ \mat J_2^2=-\mat 1_2.$$

Based on the Kalman filter, the log-likelihood is calculated as (see, e.g., \cite{ShSt00})
\begin{equation}\label{ll}
\begin{split}
\ell =& \sum_{i=1}^T{\log\left|\mat R_{t_i|t_{i-1}}+\tau^2\mat 1_N\right|+\left(\widetilde{\vect
      w}(t_i)-\vect m_{t_i|t_{i-1}}\right)^T\left(\mat
    R_{t_i|t_{i-1}}+\tau^2\mat 1_N\right)^{-1}\left(\widetilde{\vect
      w}(t_i)-\vect m_{t_i|t_{i-1}}\right)}\\ &+\frac{TN}{2}\log(2\pi).
\end{split}
\end{equation}
Since the forecast covariance matrices $\mat R_{t_i|t_{i-1}}$ are diagonal,
calculation of their determinants and their inverses is trivial, and
computational cost is again $O(TN)$.

\begin{algorithm}
\caption{Spectral backward sampling}\label{sbs}
        \textbf{Input:} $T$, $\mat G$,  $\widetilde{\mat{Q}}$, $\mat F$, $\vect m_{t_i|t_{i-1}}$, $\vect m_{t_i|t_{i}}$, $\mat R_{t_i|t_{i}}$, $\mat R_{t_i|t_{i-1}}$, $i=1,\dots,T$\\
\textbf{Output:} a sample $\vect{\alpha}^*(t_1),\dots,\vect{\alpha}^*(t_T)$
from $[\vect \alpha|\cdot]$

\begin{algorithmic}
\STATE  $\vect{\alpha}^*(t_T)=\vect m_{t_T|t_{T}}+\left(\mat
  R_{t_T|t_{T}}\right)^{1/2} \vect n_T,~~ \vect n_T \sim N(\vect 0,\mat 1_N)$
\FOR{$i=T-1$ to $1$}
\STATE  $\overline{\vect m}_{t_i}=\vect m_{t_i|t_{i}}+\mat
R_{t_i|t_{i}}\mat R_{t_i|t_{i-1}} ^{-1}\mat
G^T\left(\vect{\alpha}^*(t_{i+1})-\vect m_{t_i|t_{i-1}}\right)$
\STATE  $\overline{\mat R}_{t_i}=\left(\widetilde{\mat{Q}} \mat F + \mat
  R_{t_{i-1}|t_{i-1}}^{-1}\right)^{-1}$
\STATE  $\vect{\alpha}^*(t_i)=\overline{\vect m}_{t_i} +\left(\overline{\mat R}_{t_i}\right)^{1/2} \vect n_i,~~ \vect n_i \sim N(\vect 0,\mat 1_N)$
\ENDFOR
\end{algorithmic}
\end{algorithm}
In a Bayesian context, the main difficulty consists in simulating from the full
conditional of the latent coefficients $[\vect \alpha|\cdot]$. After running the Kalman filter, this
can be done with a backward sampling step. Together, these two algorithms
are know as Forward Filtering Backward Sampling (FFBS) \citep{CaKo94, Fr94}. Again, backward sampling is computationally very efficient in the spectral space with cost being $O(TN)$. Algorithm \ref{sbs} shows the backward sampling
algorithm in the spectral space. The matrices $\overline{\mat R}_{t_i}$ are diagonal which makes
their Cholesky decomposition trivial.

\subsection{Dimension Reduction and Missing or Non-Gridded Data}\label{dimred}
If desired, the total computational cost can be additionally alleviated by
using a reduced dimensional Fourier basis with $K<<N$, $N$ being the number
of grid points. This means that one includes only
certain frequencies, typically low ones. When the Fourier
transform has been made, the spectral filtering and
sampling algorithms then require $O(KT)$ operations. For using the FFT, the
frequencies being excluded are just set to zero. Performing the FFT still
requires $O(TN\log N)$ operations, though.

When the observed data does not lie on a grid or has missing data, there
are two alternative approaches. First, one can use a data augmentation approach \citep{SmRo93}
for the missing data. See Section \ref{Fitting} and, for more details,
\cite{SiKuSt11}. For irregularly spaced data, one can assign the
data to a regular grid and treat the cells with no observations as missing data. FFT can then be applied to the augmented data, and the
algorithms presented above can be used. Alternatively, as is the case in our application, one can
include an incidence matrix $\mat H$ that relates the process on the grid to the
observation locations. Instead of \eqref{GausObs}, the model is then
\begin{equation}\label{Incidence}
\vect{w}(t_{i+1})=\mat H \mat{\Phi}\vect{\alpha}(t_{i+1})+\vect
\nu(t_{i+1}), ~~\vect \nu(t_{i+1})\sim
N(0,\tau^2   \mat{1}_N).
\end{equation} 
However, in the Kalman filter, the term  $(\mat H \mat{\Phi})^T\mat H
\mat{\Phi}$, used for calculating the filter covariance matrix $\mat R_{t_i|t_{i}}$, is not a
diagonal matrix anymore. From this follows that the
Kalman filter does not diagonalize in the spectral space if one uses an incidence matrix $\mat H$. Consequently, one has to
use the traditional FFBS for which computational cost is $O(K^3T)$. This
means that dimension reduction is required to make this approach
computationally feasible.

\subsection{An MCMC Algorithm for Bayesian Inference}\label{MCMC}
Based on the algorithms presented above, there are different possible ways for doing
statistical inference. For instance, if one adopts a frequentist paradigm, one can numerically
maximize the log-likelihood in \eqref{ll}. In the following, we briefly
present how Bayesian inference can be done using a Monte Carlo Markov Chain
(MCMC) algorithm \citep[see][]{GilWRS96, RobCC04, brooks2011handbook}. This algorithm is implemented in the R
package \func{spate} \citep{SiKuSt12b} and used in the application
in Section \ref{Postproc}. 

To complete the specification of a Bayesian model, prior
distributions for the  parameters $\vect
\theta=(\rho_0,\sigma^2,\zeta,\rho_1,\gamma,\alpha,\mu_x,\mu_y,\tau^2)^T$
have to be chosen. In general, this choice can depend on the specific
application. We present choices for priors that are weakly uninformative. Based on \cite{Ge06}, we suggest to use improper
priors for the $\sigma^2$ (marginal variance of the innovation) and
$\tau^2$ (nugget effect variance) that are uniform on the standard deviation scale $\sigma$ and
$\tau$, respectively. Further, the drift parameters $\mu_x$ and $\mu_y$ have uniform priors on
$[-0.5,0.5]$, $\psi$ (direction of anisotropy) has a uniform prior on
$[0,\pi/2]$, and $\gamma$ (degree of anisotropy) has
a uniform prior on the log scale of the interval $[0.1,10]$. $\gamma$ is restricted to $[0.1,10]$ since stronger anisotropy does
not seem reasonable. The range parameters of the innovations and the
diffusion matrix $\rho_0$ and $\rho_1$, respectively, as well as
the damping parameter $\zeta$ are assigned improper, locally uniform priors
on $\mathbb{R}_+$.

Our goal is then to simulate from the joint posterior of the unobservables $[\vect{\theta},\vect{\alpha}|\vect{w}]$, where
$\vect{w}$ denotes the set of all observations. Missing data can be
accommodated for by using a data augmentation approach which results in an
additional Gibbs step, see Section \ref{Fitting}. Since the latent process $\vect \xi$
is the Fourier transform of the coefficients $\vect \alpha$, $\vect
\xi(t_i)=\mat \Phi \vect \alpha(t_i)$, sampling from posterior of $\vect \alpha$ is, from a methodological point of view, equivalent to sampling from the one of $\vect \xi$. In the following, we use the notation $[w|\cdot]$ and $P[w|\cdot]$  to
denote conditional distributions and densities, respectively. 

A straightforward approach would be to sample iteratively from the full
conditionals of $\vect{\theta}$ and $\vect{\alpha}$. One could also further divide the
latent process $\vect{\alpha}$ in blocks by iteratively sampling $\vect{\alpha}(t_i)$ at each
time point. However $\vect{\theta}$ and $\vect{\alpha}$ can be
strongly dependent which results in slow mixing. This problem is similar to the one observed when doing
inference for diffusion models, see, e.g., \cite{RoSt01} and
\cite{GoWi08}. It is therefore recommendable to sample jointly from $[
\vect{\theta},\vect{\alpha}|\vect w]$ in a Metropolis-Hastings step.

Joint sampling from $\vect{\theta}$ and $\vect{\alpha}$ is done as
follows. First, a proposal
$(\vect{\theta}^*,\vect{\alpha}^*)$ is obtained by sampling
$\vect{\theta}^*$ from a Gaussian distribution with the
mean equaling the last value and an adaptively estimated proposal
covariance matrix. To be more specific, $\rho_0,\sigma^2,\zeta,\rho_1,\gamma$, and $\tau^2$ are sampled on a log scale to ensure that they remain positive. Then, a sample $\vect{\alpha}^*$ from
$[\vect{\alpha}|\vect{\theta}^*,\vect w]$ is obtained using the forward filtering backward sampling
(FFBS) algorithm \citep{CaKo94, Fr94}. It can be shown that the acceptance ratio for the joint
proposal is
  \begin{equation}\label{AccRat}
 \min\left(1,\frac{P[\vect{\theta}^*|\vect{w}]P[\vect{\theta}^*] \rho_0^* \sigma^{2*} \zeta^* \rho_1^* \gamma^* \tau^{2*} }{P[\vect{\theta}^{(i)}|\vect{w}]P[\vect{\theta}^{(i)}]\rho_0^{(i)} \sigma^{2{(i)}} \zeta^{(i)} \rho_1^{(i)} \gamma^{(i)} \tau^{2{(i)}}}\right),
  \end{equation}
where $P[\vect{\theta}|\vect{w}]$ denotes the likelihood of
$\vect{\theta}$ given $\vect{w}$, $P[\vect{\theta}]$ the prior, and where $\vect{\theta}^*$ and $\vect{\theta}^{(i)}$
denote the proposal and the last values, respectively. The factor $\rho_0\sigma^{2} \zeta \rho_1 \gamma \tau^{2}$ is included since these parameters are sampled on a log scale. We see that the above
acceptance ratio does not depend on  the latent process
$\vect \xi = \mat \Phi \vect{\alpha}$. Thus, the parameters $\vect{\theta}$ are allowed to move
faster in their parameter space. The value of the likelihood
$P[\vect{\theta}|\vect{w}]$ is obtained as a side product
of the Kalman filter in the FFBS. 

For this random walk Metropolis step, we suggest to use an adaptive algorithm
\citep{RoRo09} meaning that the proposal covariance matrices for $\vect{\theta}$ are successively estimated such that an optimal scaling is obtained with an acceptance rate between $0.2$ and $0.3$. See \cite{RoRo01} for more information on optimal scaling for Metropolis-Hastings algorithms. 

In
addition, if the model includes a regression term (see
the application in Section \ref{Postproc}), the fixed effects can also be
strongly dependent with the random effects $\vect \xi$. This means that it is advisable that the coefficients $\vect
b \in \mathbb{R}^p$ of the potential covariates $\vect x(t, \vect s) \in
\mathbb{R}^p$ are also sampled together with $\vect{\theta}$ and
$\vect{\alpha}$. This can be done by slightly modifying the above
algorithm. First, the regression coefficients $\vect{b}^*$ are proposed
jointly with $\vect{\theta}^*$ in a random walk Metropolis step. Then
$\vect{\alpha}^*$ is sampled from
$[\vect{\alpha}|\vect{\theta}^*,\vect{b}^*,\vect w]$ analogously using the
FFBS. Finally, in the acceptance ration in \eqref{AccRat}, $P[\vect{\theta}|\vect{w}]$ now just has to be
replaced by $P[\vect{\theta},\vect b|\vect{w}]$ which is also a side
product of the Kalman filter.

\section{\titleq{Postprocessing Precipitation Forecasts}}\label{Postproc}
Numerical weather prediction (NWP) models are capable of
producing predictive fields at spatially and temporally high
frequencies. Statistical postprocessing, which is the main objective of this application, serves two purposes. First,
probabilistic predictions are obtained in cases where only deterministic
ones are available. Further, even if ``probabilistic'' forecasts in form of
ensembles \citep{Pa02, GnRa05} are available, they are typically
not calibrated, i.e., they are often underdispersed \citep{HaCo97}. The
goal of postprocessing is then to obtain calibrated and sharp predictive
distributions (see \cite{GnBaRa07} for a definition of calibration and
sharpness). In the case of precipitation, the need for postprocessing is
particularly strong, since, despite their importance, precipitation
forecasts are still not as accurate as forecasts for other meteorological
quantities \citep{ApEtAl02, StYu07}. 

Several approaches for postprocessing precipitation forecasts have been
proposed, including linear regression \citep{An00}, logistic regression
\citep{HaWhWe04}, quantile regression \citep{Br04, FrHe07},
hierarchical models based on a prior climatic distribution \citep{KrMa06},
neural networks \citep{RaVeFe05}, and binning techniques \citep{YuSt06}. \cite{SlRaGn07} propose a
two-stage model to postprocess precipitation forecasts. \cite{BeRaGn08}
extended the model of \cite{SlRaGn07} by accounting for spatial
correlation. \cite{KlRaGn11} present a similar model that includes
ensemble predictions and accounts for spatial correlation.

Except for the last two references, spatial correlation is typically not modeled in postprocessing precipitation forecasts, and none of the
aforementioned models explicitly accounts for spatio-temporal dependencies. However,
for temporally and spatially highly resolved data, it is necessary to
account for correlation in space and time. First, spatio-temporal correlation is
important, for instance, for predicting precipitation accumulation over
space and time
with accurate estimates of precision. Further, it is likely that errors of
NWP models exhibit structured behaviour over space and time, including interactions between
space and time. The SPDE approach allows for such interactions, as do other approaches
which use scientifically-based physical models \citep{WiHo10}.

\subsection{Data}

 \begin{figure}[!ht]
\begin{center}
\includegraphics[width=0.65\textwidth]{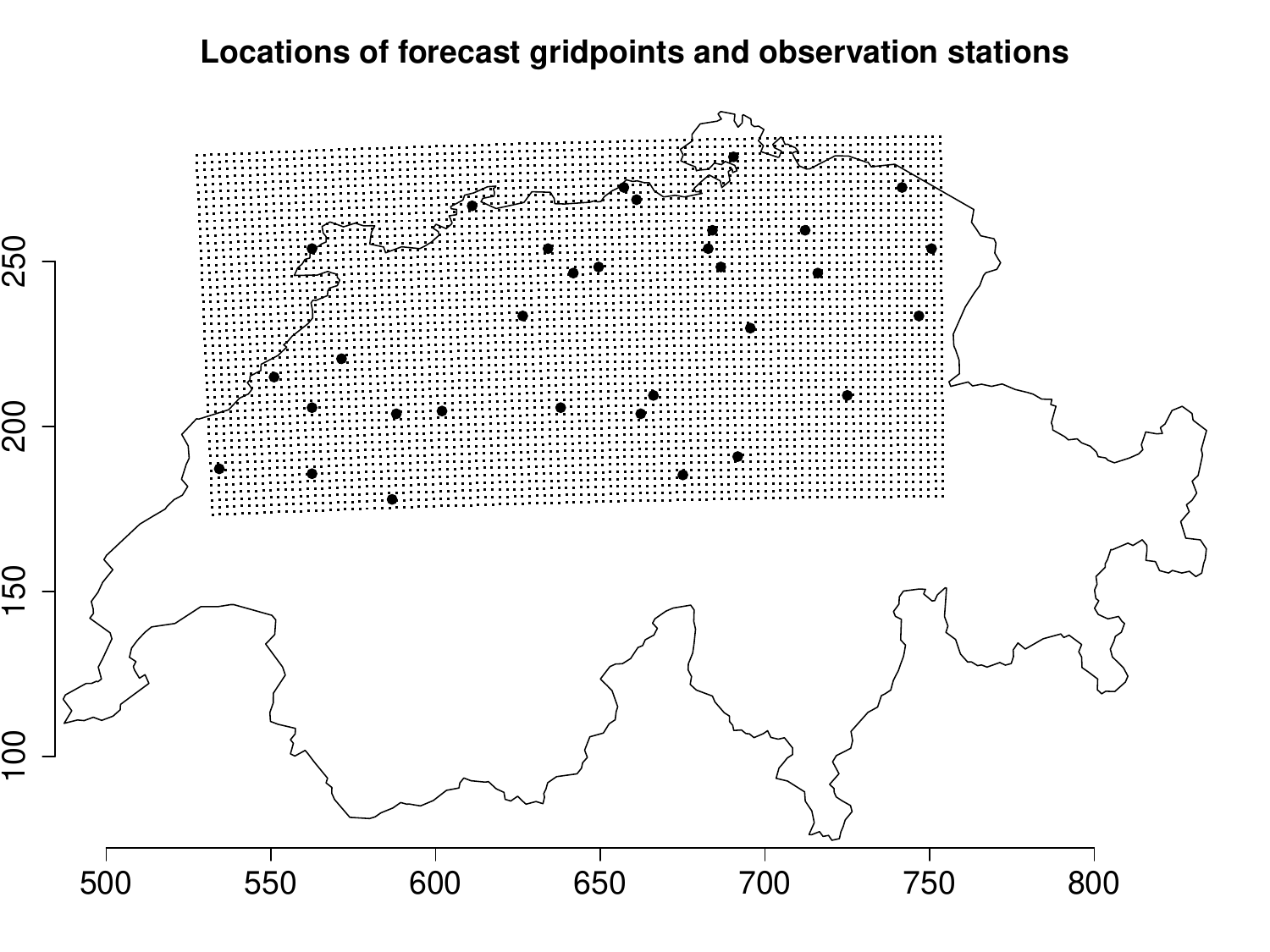} 
\end{center}
\caption{Locations of grid points at which predictions are obtained ($50
  \times 100$ grid of small dots) and
  observations stations (bold dots). Both axis are in km using the Swiss
  coordinate system (CH1903).} 
\label{fig:Stations}
\end{figure}
The goal is to postprocess precipitation forecasts from an NWP model called COSMO-2, a
high-resolution model with a grid spacing of 2.2 km that is run by
MeteoSwiss as part of COonsortium for Small-scale MOdelling (COSMO)
\citep[see, e.g.,][]{StEtAl03}. The NWP model produces deterministic
forecasts once a day
starting at 0:00UTC. Predictions are made for eight consecutive time
periods corresponding to 24 h ahead. In the following, let
$y_{F}(t,\vect{s})$ denote the forecast of the rainfall sum from time $t-1$
to $t$ at site $s$ made at 0:00UTC of the same day. We
consider a rectangular region in northern Switzerland shown in Figure
\ref{fig:Stations}. The grid at which predictions are made is of size $50
\times 100$. Precipitation is
observed at 32 stations over northern Switzerland. Figure
\ref{fig:Stations} also shows the locations of the observation
stations. In the postprocessing model, the NWP forecasts are used as
covariates in a regression term, see \eqref{ppmodel}. We use data for three-hourly rainfall amounts from the beginning of
December 2008 till the end of March 2009. To illustrate the observed data, in Figure \ref{fig:RainVsTime},
observed precipitation at one station and the equally weighted areal average precipitation
are plotted versus time. We will use the first three months containing
720 time points for fitting, and the last month is left aside for
evaluation.
\begin{figure}[!ht]
\begin{center}
\includegraphics[width=0.65\textwidth]{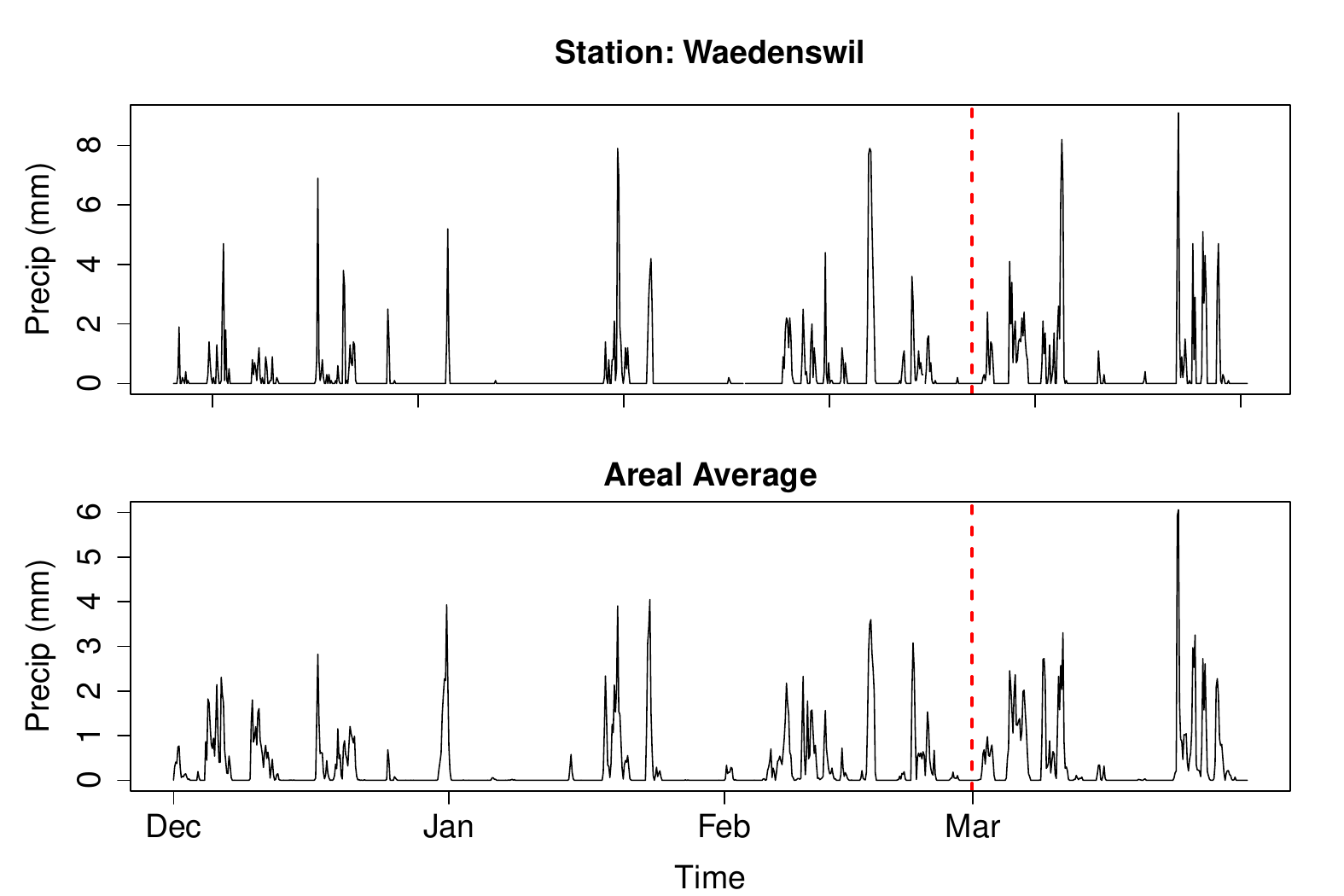} 
\end{center}
\caption{Precipitation (mm) versus time,
for one station and averaged over all stations.} 
\label{fig:RainVsTime}
\end{figure}

The NWP model forecasts are deterministic and ensembles are not available in our
case. However, the extension to use an ensemble instead of just one member can
be easily done. One can include all the ensemble members in the
regression part of the model. Or, in the case of exchangeable members,
one can use the location and the spread of the ensemble.

\subsection{\titleq{Precipitation Model for Postprocessing}}\label{Application}
The model presented in the following is a Bayesian hierarchical model
(BHM). It uses the SPDE based spatio-temporal Gaussian process $\xi(t,
\vect s)$ presented in Section \ref{SpecSpace} at the process level. At the data stage, a mixture model adapted to the nature of precipitation is
used. A characteristic feature of precipitation is that its distribution consists
of a discrete component, indicating occurrence of precipitation,
and a continuous one, determining the amount (see Figure \ref{fig:RainVsTime}). As a consequence, there are two
basic statistical modeling approaches. The continuous and the discrete part are either
modelled separately \citep{CoSt82, Wi99} or together
\citep{Be87, Wi90, BaPl92, Hu95, SaGu04}. See, e.g.,
\cite{SiKuSt11} for a more extensive overview of precipitation models and
for further details on the data model used below. Originally, the approach presented in the following
goes back to \cite{To58} who analyzed household expenditure
on durable goods. For modeling precipitation, \cite{St73} took up this idea
and modified it by including a power transformation for the non-zero part so that the model
can account for skewness. \cite{SaGu99} develop
Bayesian methods for the spatio-temporal analysis of rainfall using this skewed Tobit model, but in contrast to our application
they do not explicitly account for temporal correlation and they use a much smaller spatial grid.


We denote the cumulative rainfall from time $t-1$ to
$t$ at site $\vect s \in \mathbb{R}^{2}$ by $y(t,\vect{s})$ and assume that it depends
on a latent Gaussian variable $w(t,\vect{s})$ through
\begin{equation}\label{rainrel}
\begin{split}
  y(t,\vect{s})&=0 ,\hspace{1.5cm} \text{if}~w(t,\vect{s}) \leq 0,\\
  &=w(t,\vect{s})^{\lambda},\hspace{0.4cm} \text{if}~w(t,\vect{s})>0,\\
\end{split}
\end{equation}
where $\lambda>0$. A power transformation is needed since precipitation
amounts are skewed and do not follow a truncated normal distribution. The
latent Gaussian process $w(t,\vect{s})$ is interpreted as a precipitation potential. 


The mean of the Gaussian process $w(t,\vect{s})$ is assumed to depend linearly on
spatio-temporal covariates
$\vect{x}(t,\vect{s})\in \mathbb{R}^{k}$. As shown below, this mean term
basically consists of the NWP forecasts. Variation that is not explained
by the linear term is modeled using the Gaussian process $\xi(t,\vect{s})$
and the unstructured term $\nu(t,\vect{s})$ for microscale variability and
measurement errors. The spatio-temporal process $\xi(t,\vect{s})$ has two
functions. First, it captures systematic errors of the NWP in space and
time and can extrapolate
them over time. Second, it accounts for structured variability so that the
postprocessed forecast is probabilistic and its distribution sharp and
calibrated. 

To be more specific concerning the covariates, similarly to what appears in \cite{BeRaGn08}, we include a transformed variable $y_{F}(t,\vect{s})^{1/\tilde{\lambda}}$
and an indicator variable $\mathbbm{1}_{\{y_{F}(t,\vect{s})=0\}}$ which
equals $1$ if $y_{F}(t,\vect{s})=0$ and $0$ otherwise. $\tilde{\lambda}$ is
determined by fitting the transformed Tobit model as in \eqref{rainrel} to
the marginal distribution of the rain data ignoring any spatio-temporal
correlation. In doing so, we obtain $\tilde{\lambda}\approx 1.4$. $y_{F}(t,\vect{s})^{1/\tilde{\lambda}}$ is centered around zero by subtracting its
overall mean $\overline{y}^{1/\tilde{\lambda}}_{F}$ in order to reduce
posterior correlations. Thus, $w(t,\vect{s})$  equals
\begin{equation}\label{ppmodel}
w(t,\vect{s})=b_1\left(y_{F}(t,\vect{s})^{1/\tilde{\lambda}}-\overline{y}_F^{1/\tilde{\lambda}}\right)+b_2\mathbbm{1}_{\{y_{F}(t,\vect{s})=0\}}+\xi(t,\vect{s})+\nu(t,\vect{s}).
\end{equation}
An intercept is not included since the first Fourier term is constant in
space. In our case, including an intercept term results in weak identifiability
which slows down the convergence of the MCMC algorithm used for fitting. Note that in
situations where the mean is large it is advisable to include an intercept,
since the coefficient of the first Fourier term is constrained by the joint
prior on $\vect \alpha$. Further, unidendifiability is unlikely to be a
problem in these cases.

Concerning the spatio-temporal process $\xi(t,\vect{s})$, we apply
padding. This means that we embed the $50\times 100$ grid in a rectangular $200
\times 200$ grid. A brief prior investigation showed that the range parameters are relatively large in comparison
to the spatial domain, and padding is therefore used in order to avoid
spurious correlations due to periodicity. The NWP forecasts are
not available on the extended $200 \times 200$ domain, which means that, in principle, the process
$w(t,\vect{s})$ can only be modeled on the $50\times 100$ grid where the
covariates are available. To cope with this we use an incidence
matrix $\mat H$ as in \eqref{Incidence} to relate the process at the $200 \times 200$ grid
to the observation stations. As argued in Section \ref{dimred}, this then requires that we use
a reduced dimensional Fourier expansion. I.e., instead of using $N=200^2$
basis functions, we only use $K<<N$ low-frequency Fourier terms. Since the
observation stations are relatively scarce, one might argue that there is no information on spatial high frequencies of the NWP
error, and that the high frequencies can be left out. In fact, this
hypothesis gets confirmed by our analysis, see Figure \ref{fig:CRPS}.

Concerning prior distributions, for $\vect
\theta=(\rho_0,\sigma^2,\zeta,\rho_1,\gamma,\psi,\mu_x,\mu_y,\tau^2)^T$, we
use the priors presented in Section \ref{MCMC}. The parameters $\vect{b}$
and $\lambda$, which are not included in $\vect \theta$, have improper, locally uniform priors on $\mathbb{R}$
and $\mathbb{R}_+$, respectively.  In summary, 
$$P[\vect b, \lambda, \vect \theta] \propto \frac{1}{\sqrt{\sigma^2}\sqrt{\tau^2}\gamma} \mathbbm{1}_{\{-0.5\leq \mu_x,\mu_y\leq 0.5\}}
\mathbbm{1}_{\{0\leq \psi \leq
  \pi/2\}}\mathbbm{1}_{\{\lambda,\rho_0,\rho_1,\zeta,\sigma^2,\tau^2\geq0\}}\mathbbm{1}_{\{0.1\leq
  \gamma \leq 10\}}.$$
In addition, concerning $\vect \alpha(0)$, we choose to use the innovation
distribution specified in \eqref{WhittleSpec} as initial distribution. 

\subsection{\titleq{Fitting}}\label{Fitting}
Monte Carlo Markov Chain (MCMC) is used to sample from the posterior
distribution $[\vect{b},\lambda,\vect{\theta},\vect{\alpha},\vect{w}|\vect{y}]$, where
$\vect{y}$ denotes the set of all observations. We use what \cite{NeRo06} call a
Metropolis within-Gibbs algorithm which alternates between blocked Gibbs
\citep{GeSm90} and Metropolis \citep{MeetAl53, Ha70} sampling steps. 

We use the Metropolis-Hastings algorithm presented in Section
\ref{MCMC} with the coefficients $\vect b$ being sampled jointly with
$\vect \theta$ and $\vect \alpha$. Due to the non-Gaussian data model, additional Metropolis and Gibbs steps are
required for $\lambda$ and for those points of $\vect{w}$ where the
observed rainfall amount is zero and where observations are missing. We
refer to \cite{SiKuSt11} for more details on the type of data
augmentation approach that is used for doing this. We denote by
$\vect{w}^{[0]}$ the values of $\vect{w}$ at those points where the
observed rainfall is zero, $y(t,\vect s)=0$. Analogously, we define $\vect{w}^{[m]}$ and
$\vect{w}^{[+]}$ for the missing values and the values
where a positive
rainfall amount is observed, $y(t,\vect s)>0$, respectively. The full conditionals of the
censored $\vect{w}^{[0]}$ and missing points $\vect{w}^{[m]}$ are truncated
and regular one-dimensional Gaussian distributions, respectively. Sampling
from them is done in Gibbs steps. The
transformation parameter $\lambda$ is sampled using a random walk
Metropolis step. If a new value is accepted, $\vect{w}^{[+]}$ needs
to be updated using the deterministic relation $w(t,\vect s)=y(t,\vect
s)^{1/\lambda}$ due to \eqref{rainrel}. 
From these Gibbs and
Metropolis steps, we obtain $\vect{w}$ consisting of simulated and
transformed observed data. In the second part of the algorithm, we sample
$\vect b, \vect{\theta}$, and $\vect{\alpha}$
jointly from $[\vect b, \vect{\theta},\vect{\alpha}|\vect{w}]$ using the
algorithm presented in Section
\ref{MCMC}, where $\vect{w}$ acts as if it was the observed
data. After a burn-in of $5,000$ iterations,
we use $100,000$ samples from the Markov chain to characterize the
posterior distribution. Convergence is monitored by inspecting trace
plots.

\subsection{Model Selection and Results}
We use a reduced dimensional approach. The
number of Fourier functions is determined based on predictive
performance for the $240$ time points that were set aside. We start with models including only low spatial frequencies and add
successively higher frequencies. In doing so, we only consider models that
have the same resolution in each direction, i.e., we do not
consider models that have higher frequency spatial basis functions in the
east-west direction than in the north-south one.

In order to assess the performance of the predictions and to choose the number of basis functions to include, we use the continuous
ranked probability score (CRPS) \citep{MaWi76}. The
CRPS is a strictly proper scoring rule \citep{GnRa07} that assigns a
numerical value to probabilistic forecasts and assesses calibration and
sharpness simultaneously \citep{GnBaRa07}. It is defined as
\begin{equation}
CRPS(F,y)=\int_{-\infty}^{\infty}(F(x)-\mathbbm{1}_{\{y \leq x \}})^2dx,
\end{equation}
where $F$ is the predictive cumulative distribution, $y$ is the observed
realization, and $\mathbbm{1}$ denotes an indicator function.  If a sample $y^{(1)},\dots, y^{(m)}$ from $F$ is available, it can be
approximated by
\begin{equation}
\frac{1}{m}\sum_{i=1}^m|y^{(i)}-y|-\frac{1}{2m^2}\sum_{i,j=1}^m|y^{(i)}-y^{(j)}|.
\end{equation}

Ideally, one would run the full MCMC algorithm at each time point $t\geq 720$, including all data up to the point,
and obtain predictive distributions from this. Since this is rather time consuming, we make the following
approximation. We assume that the posterior
distribution of the ``primary'' parameters $\vect{\theta}$, $\vect{b}$, and
$\lambda$ given
$\vect{y}_{1:t}=\{\vect{y}_1,\dots,\vect{y}_{t}\}$ is the same for all
$t\geq 720$. That is, we neglect the additional information that the
observations in March provide about the primary parameters. Thus, the posterior distributions of the primary parameters are calculated
only once, namely on the data set from December 2008 to February 2009. The assumption that the posterior of the primary parameters does not change
with additional data may be questionable over longer time periods and
when one moves away from the time period from which data is used to
obtain the posterior distribution. But
since all our data lies in the winter season, we think that this assumption
is reasonable. If longer time periods are considered, one could use sliding
training windows or model the primary parameters as non-stationary using a
temporal evolution.

For each
time point $t \geq 720 $, we make up to $8$ steps ahead forecasts
corresponding to 24 hours. I.e., we sample from the
predictive distribution of $\vect{y}^*_{t+k}$, $k= 1,\dots 8$, given
$\vect{y}_{1:t}=\{\vect{y}_1,\dots,\vect{y}_{t}\}$.

 \begin{figure}[!ht]
\begin{center}
\includegraphics[width=\textwidth]{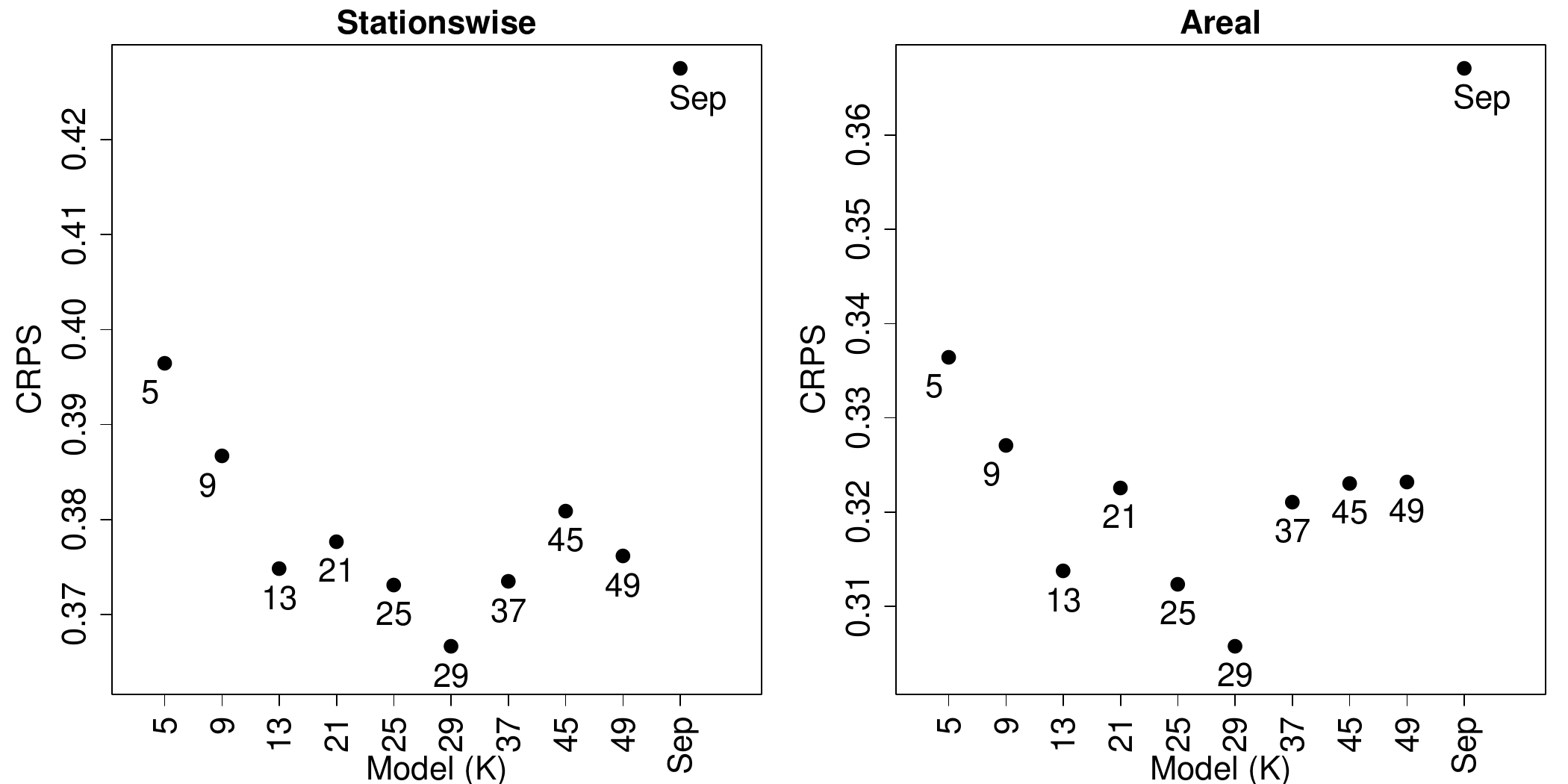} 
\end{center}
\caption{Comparison of different statistical models using the continuous ranked probability score (CRPS). On the left are CRPSs of station specific
forecasts and on the right are CRPSs of areal forecasts. $K$ denotes the
number of basis functions used in the model.
``Sep'' denotes the separable model with $K=29$ Fourier terms. The unit of the CRPS is mm.} 
\label{fig:CRPS}
\end{figure}
In Figure \ref{fig:CRPS}, the average CRPS of the pointwise predictions and
the areal predictions are shown for the different statistical models. In
the left plot, the mean is taken over all stations and lead times, whereas
the areal version is an average over all lead times. This is done for the models with different numbers of basis
functions used. Models
including only a few low-frequency Fourier terms perform worse. Then the
CRPS decreases successively. The model based on including $K=29$ Fourier
functions performs best. After this, adding higher
frequencies results into lower predictive performance. We interpret this
results in the way that the observation data does not allow for resolving high
frequencies in the error term between the forecasted and observed precipitation. Note that high frequencies of the precipitation process itself are accounted for by the forecast $y_F$. For comparison, we also fit a separable model
which is obtained by setting $\vect{\mu}=\vect 0$ and
$\mat{\Sigma}^{-1}=\mat 0_{2,2}$. Concerning the number of Fourier functions, we
use $K=29$ different Fourier terms. The separable model clearly performs worse than the
model with a non-separable covariance structure. Based on these findings, we decided
to use the model with $29$ cosine and sine functions. 

\begin{table}
\caption{Posterior medians and 95 $\%$ credible intervals for the SPDE
  based spatio-temporal model presented in Section \ref{SpecSpace} with $K=29$ Fourier terms.\label{tab:PostQuant}}
\centering
\begin{tabular}{rrrr}
  \hline
\hline
 & Median & 2.5 \% & 97.5 \% \\ 
  \hline
$\rho_0$ & 25.4 & 18.8 & 32.4 \\ 
  $\sigma^2$ & 0.838 & 0.727 & 0.994 \\ 
  $\zeta$ & 0.00655 & 0.000395 & 0.0156 \\ 
  $\rho_1$ & 48.8 & 42.1 & 57.1 \\ 
  $\gamma$ & 4.33 & 3.34 & 6.01 \\ 
  $\psi$ & 0.557 & 0.49 & 0.617 \\ 
  $\mu_x$ & 6.73 & 0.688 & 12.9 \\ 
  $\mu_y$ & -4.19 & -8.55 & -0.435 \\ 
  $\tau^2$ & 0.307 & 0.288 & 0.327 \\ 
  $b_1$ & 0.448 & 0.414 & 0.481 \\ 
  $b_2$ & -0.422 & -0.5 & -0.344 \\ 
  $\lambda$ & 1.67 & 1.64 &  1.7 \\ 
   \hline
\end{tabular}
\end{table}
Table \ref{tab:PostQuant} shows posterior medians as well as $95 \%$ credible
intervals for the different parameters. Note that the range parameters $\rho_0$
and $\rho_1$ as well as the drift parameters $\mu_x$ and $\mu_y$ have been
transformed back from the unit $[0,1]$ scale to the original km scale. The posterior median of the variance $\sigma^2$ of the innovations of the
spatio-temporal process is around $0.8$. Compared to this, the nugget
variance being about $0.3$ is smaller. For the
innovation range parameter $\rho_0$, we obtain a value of about $25$ km.  And the range parameter $\rho_1$ that controls the
amount of diffusion or, in other words, the amount of spatio-temporal
interaction, is approximately $49$ km.  With
$\gamma$ and $\psi$ being around $4$ and $0.6$, respectively, we
observe anisotropy in the south-west to north-east direction. This
is in line with the orography of the region, as the majority of the grid points lies
between two mountain ranges: the Jura to the north-west and the Alps to the
south-east. The drift points to the south-east, both parameters being
rather small though. Further, the
damping parameter $\zeta$ has a posterior median of about $0.01$.
 
\begin{table}
\caption{Comparison of NWP model and statistically postprocessed forecasts ('Stat PP') using the mean absolute error (MAE).  'Static' denotes the constant forecast obtained by using the most recently observed data. The unit of the MAE is mm.\label{CosmoMAE}}
\centering
\begin{tabular}{rrrr}
  \hline
\hline
 & Stat PP & NWP & Static \\ 
  \hline
Stationwise & 0.359 & 0.485 & 0.594 \\ 
  Areal & 0.303 & 0.387 & 0.489 \\ 
   \hline
\end{tabular}
\end{table}

Next, we compare the performance of the postprocessed forecasts with the
ones from the NWP model. In addition to the temporal cross-validation, we do
the following cross-validation in space and time. We first remove six randomly selected
stations from the data, fit the latent process to the remaining stations, and evaluate the forecasts at the stations left out. Concerning the primary parameters, i.e., all parameters except
the latent process, we use the posterior obtained from the full
data including all stations. This is done for computational simplicity and since this posterior is
not very sensitive when excluding a few stations (results not reported). Since the NWP produces 8 step ahead predictions once a day, we only consider
statistical forecasts starting at 0:00UTC. This is in contrast to
the above comparison of the different statistical models for which 8 step
ahead predictions were made at all time points and not just once for each day. We use the mean
absolute error (MAE) for evaluating the NWP forecasts. In order to be
consistent, we also generate point forecasts from the statistical
predictive distributions by using medians, and then calculate the MAE
for these point forecasts. In Table \ref{CosmoMAE}, the
results are reported. For comparison, we also give the score for the static forecast
that is obtained by using the most recently observed data. The
postprocessed forecasts clearly perform better than the raw NWP
forecasts. In addition, the postprocessed forecasts have the advantage that
they provide probabilistic forecasts quantifying prediction
uncertainty.

 \begin{figure}[!ht]
\begin{center}
\subfigure[NWP]{
\includegraphics[width=0.45\textwidth]{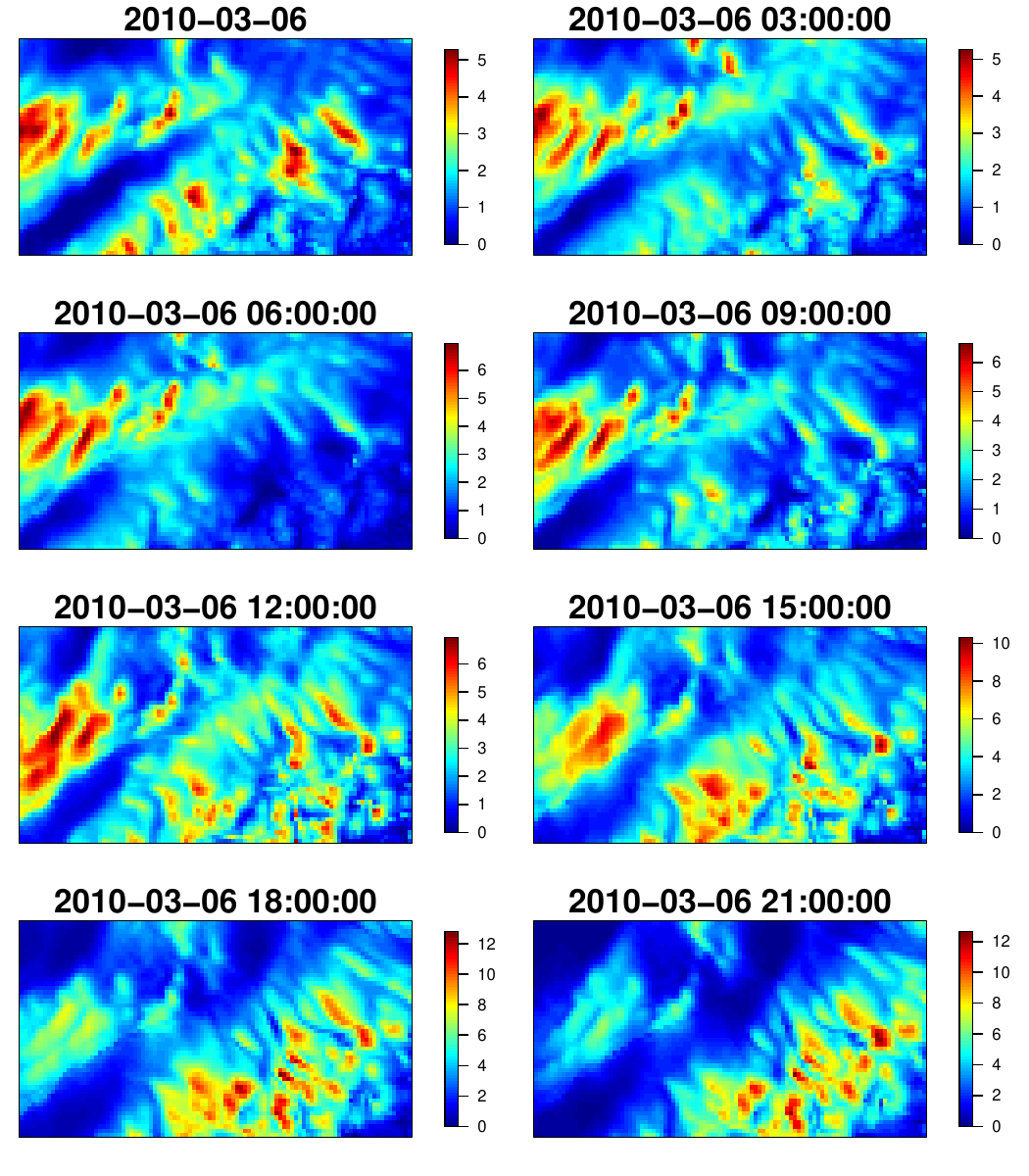} 
}
\subfigure[Median]{
\includegraphics[width=0.45\textwidth]{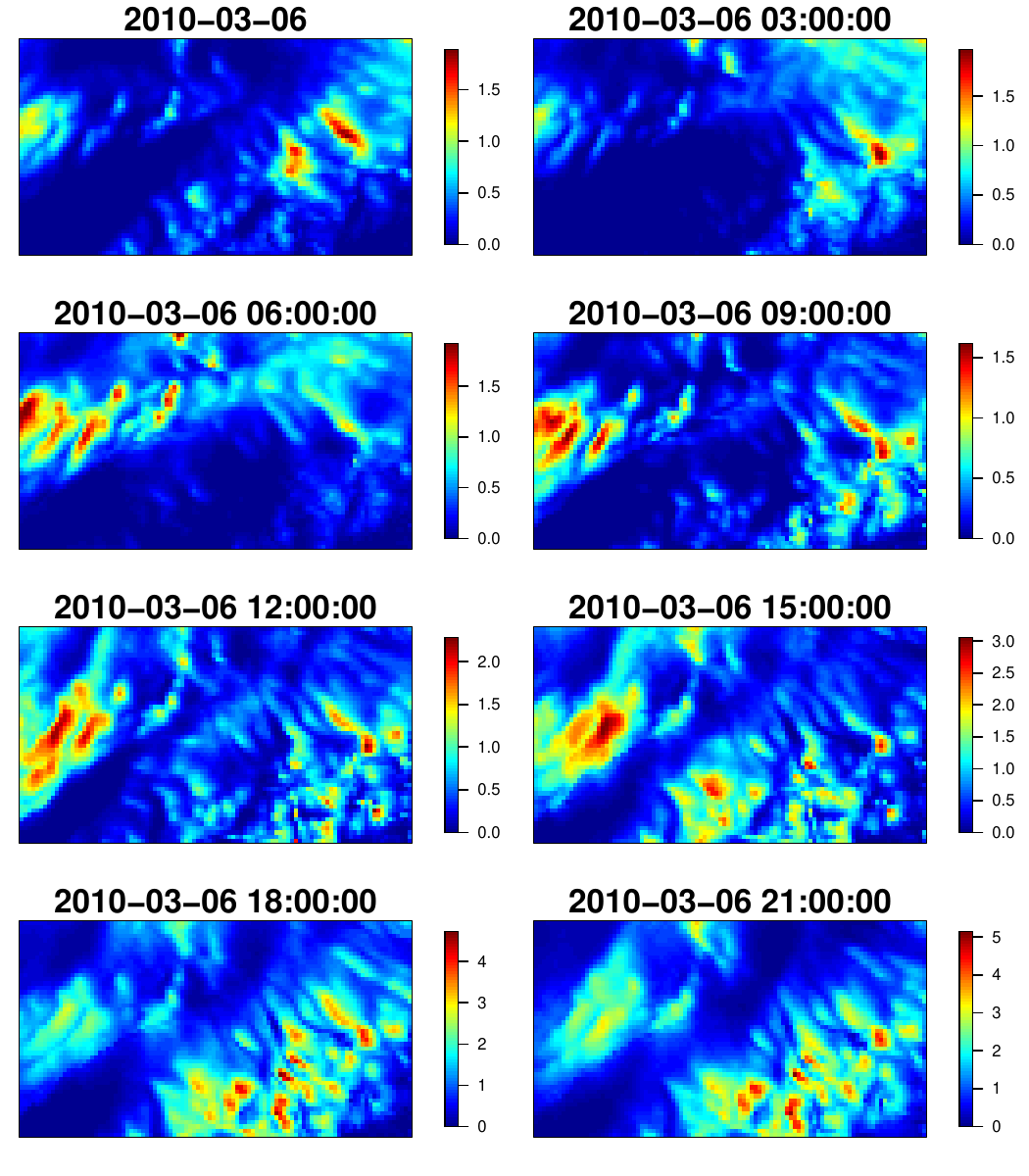} 
}
\subfigure[One sample]{
\includegraphics[width=0.45\textwidth]{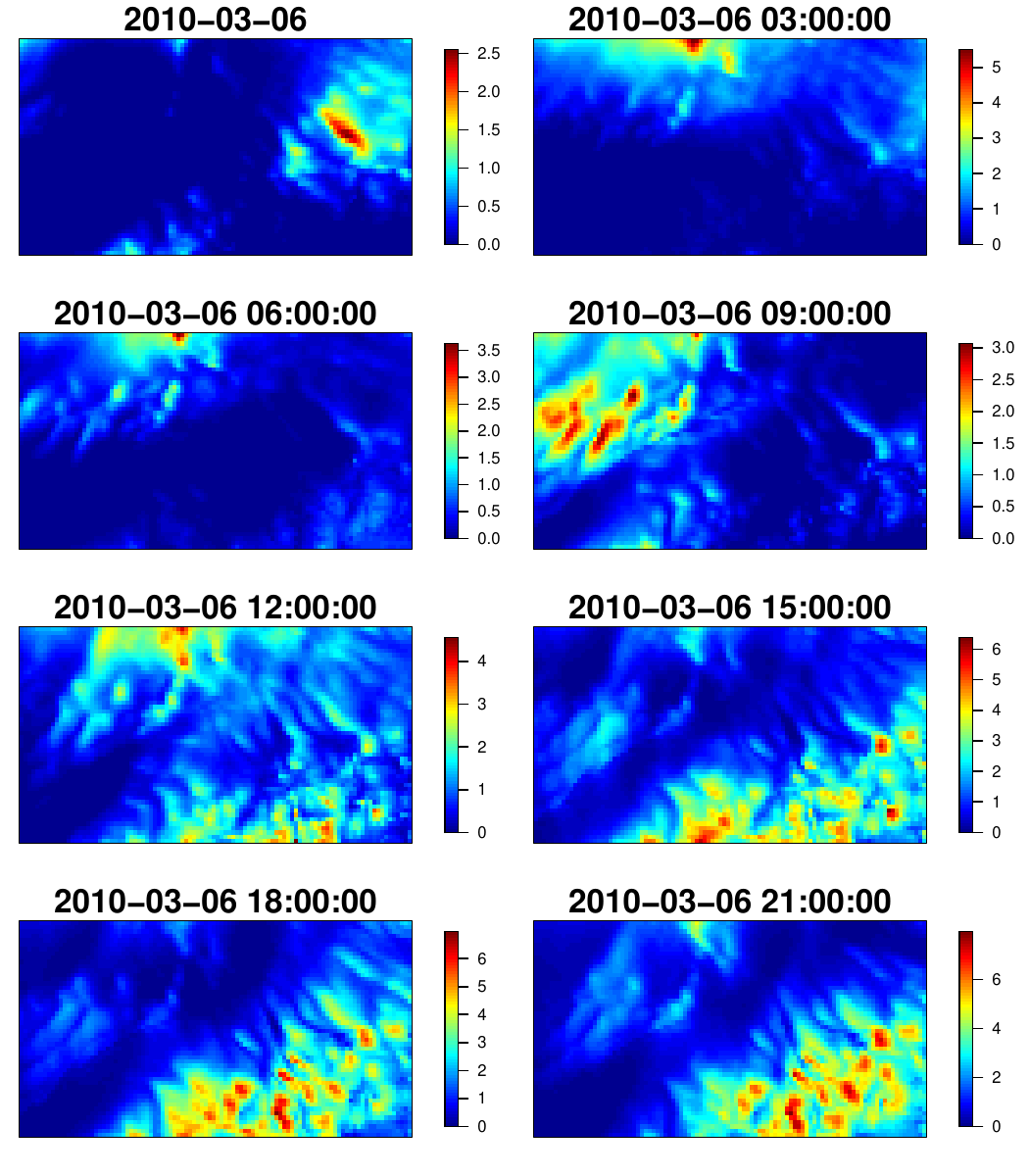} 
}
\subfigure[Quartile difference]{
\includegraphics[width=0.45\textwidth]{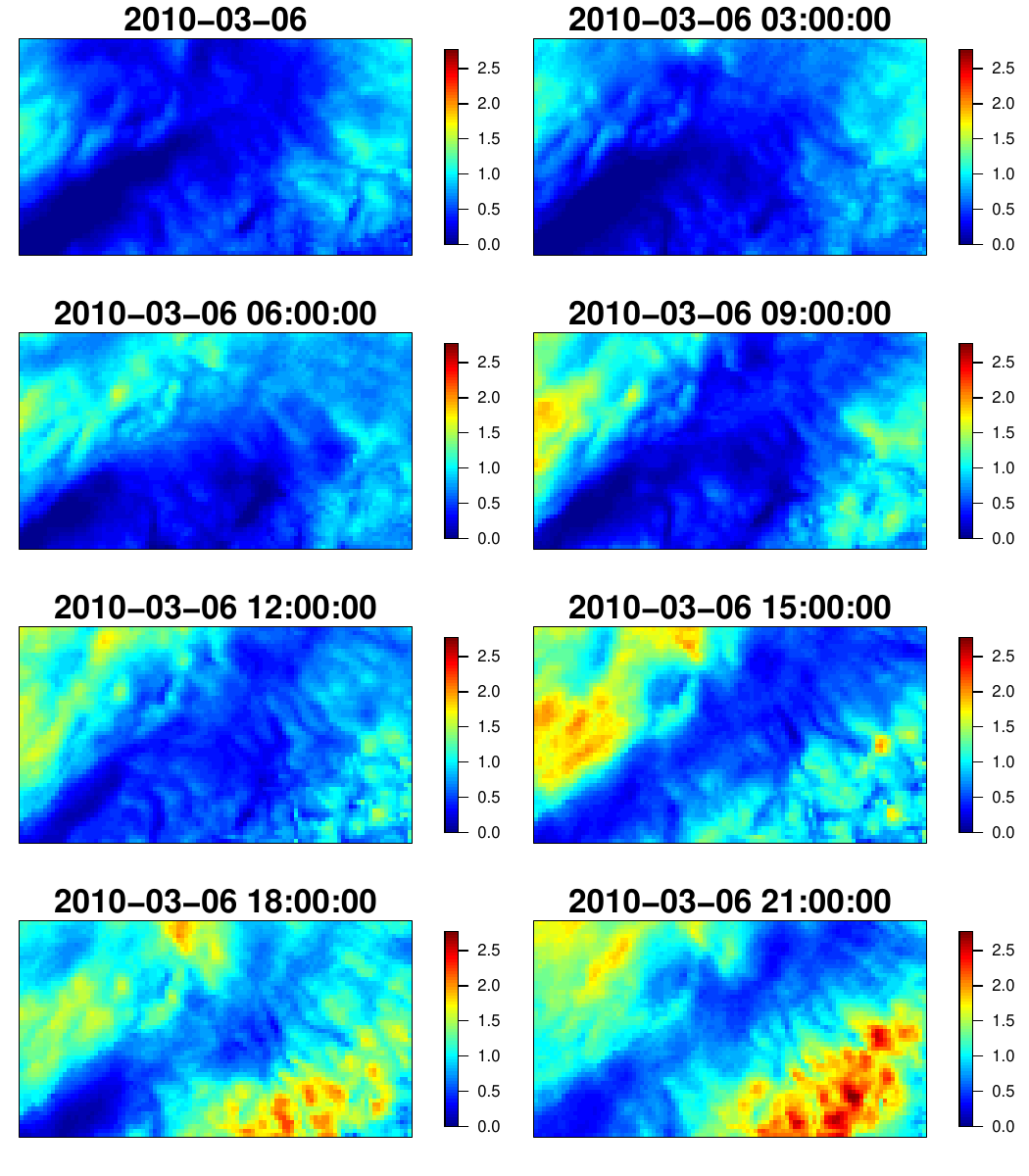} 
}
\end{center}
\caption{Illustration of postprocessed spatio-temporal precipitation
  fields for the period $t=761,\dots,768$. The figure shows the NWP forecasts (a), pointwise medians of the predictive
  distribution (b), one sample from the
  predictive distribution (c), and the differences between the third quartile and the
  median of the predictive distribution (d). All quantities are in mm. Note
that the scales are different in different figures.} 
\label{fig:Pred_Illus}
\end{figure}

The statistical model produces a joint spatio-temporal predictive
distribution that is spatially highly resolved. To illustrate the use of
the model, we show several quantities in Figure
\ref{fig:Pred_Illus}. We consider the time point $t=760$
and calculate predictive distributions over the next 24 hours. Predicted fields for the period
$t=761,\dots,768$ from the NWP are shown in the top left corner. On the right of it are pointwise medians obtained
from the statistical forecasts. This is a period during which the NWP
predicts too much rainfall compared to the observed data (results not shown). The figure shows how the statistical model
corrects for this. For illustration, we also show one sample
from the predictive distribution. To quantify prediction uncertainty, the difference between the third
quartile and the median of the predictive distribution is plotted. These
plots again show the growing uncertainty with increasing lead
time. Other quantities of interest (not shown here),
that can be easily obtained, include probabilities of
precipitation occurrence or various quantiles of the distribution.

\section{Conclusion}
We present a spatio-temporal model and corresponding efficient
algorithms for doing statistical inference for large data sets. Instead of using the covariance function, we propose
to use a Gaussian process defined trough an SPDE. The SPDE is solved using
Fourier functions, and we have given a bound on the precision of the
approximate solution. In the spectral space, one can use computationally efficient statistical algorithms whose computational costs grow linearly with
the dimension, the total computational costs being dominated by the fast
Fourier transform. The space-time Gaussian process defined through the advection-diffusion SPDE
has a nonseparable covariance structure and can be physically motivated. The model is applied to postprocessing of precipitation forecasts for northern
Switzerland. The postprocessed forecasts clearly outperform the raw NWP
predictions. In addition, they have the advantage that they quantify
prediction uncertainty.

In our analysis, we considered cumulative rainfall over 3 hours,
both in the NWP forecasts and in the station data. It would be
interesting to formulate a model which can describe different
accumulation periods in a coherent way and is still computationally
feasible.  Another interesting direction for further research would be to extend the SPDE
based model to allow for spatial non-stationarity. For instance, the deformation method
of \citet{SaGu92}, where the process is assumed to be stationary in a transformed space
and non-stationary in the original domain, might be a potential
way. Since the operators of the SPDE are local, one can define
the SPDE on general manifolds and, in particular, on the sphere (see, e.g., \cite{LiLiRu10}). Future research will show to which extent spectral methods can still be used in practice.

\section*{Acknowledgments}
We are grateful to Vanessa Stauch from MeteoSchweiz for providing the data and for
inspiring discussions. In addition, we would like to thank Peter Guttorp for
interesting comments and discussions and two referees for helpful
comments and suggestions.

\bibliographystyle{chicago}

\bibliography{/Users/fabiosigrist/Documents/Work/R/Latex/FSbiblio}

\end{document}